%%%%%%%%%%%%%%%%%%%%%%%%%%%%%%%%%%%%%%%%%%%%%%%%%%%%%%%%%%%%%%%%%%%%%%%
\documentclass[11pt,leqno,textwidth=10cm]{amsart}
\usepackage{amssymb}
\allowdisplaybreaks
\setcounter{tocdepth}{1}

\usepackage{mathrsfs}
\usepackage{mathtools}
\usepackage{abstract}
\usepackage[hidelinks]{hyperref}
\usepackage{marginnote}
\usepackage{color}

\usepackage{etoolbox}
\makeatletter
\patchcmd{\@maketitle}{\newpage}{}{}{} 
\makeatother
\numberwithin{equation}{section}

\setlength{\hoffset}{-.75in}
\setlength{\textwidth}{6.5in}
\setlength{\voffset}{-.5in}
\setlength{\textheight}{9.0in}
\setlength{\parindent}{2em}

%%%%%%%%%%%%%%%%%%%%%%%%%%%%%%%%%%%%%%%%%%%%%%%%%%%%%%%%%%%%%%%%%%%%%%%

%%%%%%%%%%%%%%%%%%%%%%%%%%%%%%%%%%%%%%%%%%%%%%%%%%%%%

\newtheorem{thm}{Theorem}[section]
\newtheorem{lem}[thm]{Lemma}
\newtheorem{prop}[thm]{Proposition}
\newtheorem{corol}[thm]{Corollary}

\theoremstyle{definition}
\newtheorem{defn}[thm]{Definition}

\theoremstyle{remark}

%%%%%%%%%%%Physical Variables

\newcommand{\p}{\partial}

\newcommand{\cR}{\mathcal{R}}
\newcommand{\cS}{\mathcal{S}}
\newcommand{\cW}{\mathcal{W}}

\newcommand{\define}{:=}
\newcommand{\temp}{\bar{t}^{\,}{}}
\newcommand{\tK}{\varkappa}

%%%%%%%%%%%%%%%%%%%%%%%%%%%%%%%%%%%%%%%%%%%%%%%%%

\newcommand{\alg}[1]{\begin{aligned}#1\end{aligned}}

\newcommand{\eq}[1]{\begin{equation} #1 \end{equation}}

%%%%%%%%%%%%%%%%%%%%%%%%%%%%%%%%%%%%%%%%%%%%%%
\begin{document}

%%%%%%%%%%%%%%%%%%%%%%%%%%%%%%%%%%%%%%%%%%%%%%%%%%%%%%%%%%%%%%%%%%%%%%%
%%%%%%%%%%%%%%%%%%%%%%%%%%%%%%%%%%%%%%%%%%%%%%%%%%%%%%%%%%%%%%%%%%%%%%%

%\title[]{Dust stabilization with backreaction in expanding spacetimes}
%\title[]{Future nonlinear stability of the Milne Model to the Dust-Einstein system}
\title[]{Stability of fluids in spacetimes with decelerated expansion}
\author[D.~Fajman, M.~Ofner, T.~Oliynyk, Z.~Wyatt]{David Fajman, Maximilian Ofner, Todd Oliynyk, Zoe Wyatt}

\date{\today}

\subjclass[2010]{35Q75; 83C05;  35B35}
\keywords{Relativistic fluids, Nonlinear Stability}

\address{\hspace{-1cm}
\begin{tabular}[h]{lll}
David Fajman \& Maximilian Ofner, &Todd Oliynyk &  Zoe Wyatt \\
Faculty of Physics, &School of Mathematics& Department of Pure Mathematics   \\ 
University of Vienna, &9 Rainforest Walk& and Mathematical Statistics, \\
Boltzmanngasse 5, &Monash University VIC 3800& Wilberforce Road,  \\
1090 Vienna, Austria. &Australia& Cambridge, CB3 0WB, U.K. \\
David.Fajman@univie.ac.at \& &todd.oliynyk@monash.edu& zoe.wyatt@maths.cam.ac.uk       \\
maximilian.ofner@univie.ac.at & &
\end{tabular}
}

\maketitle
%%%%%%%%%%%%%%%%%%%%%%%%%%%%%%%%%%%%%%%%%%%%%%
%%%%%%%%%%%%%%%%%%%%%%%%%%%%%%%%%%%%%%%%%%%%%%
\begin{abstract}
We prove the nonlinear stability of homogeneous barotropic perfect fluid solutions in fixed cosmological spacetimes undergoing decelerated expansion. The results hold provided a specific inequality between the speed of sound of the fluid and the expansion rate of spacetime is valid. Numerical studies in our earlier complementary paper provide strong evidence that the aforementioned condition is sharp, i.e.~that instabilities occur when the inequality is violated. In this regard, our present result covers the regime of slowest possible expansion which allows for fluids to stabilize, depending on their speed of sound. Our proof relies on an energy functional which is universal in the sense that it also applies to the case of linear expansion and enables a significantly simplified proof of bounds for fluids on linearly expanding spacetimes. Finally, we consider the special cases of dust and radiation fluids in the decelerated regime and prove shock formation for arbitrarily small perturbations of homogeneous solutions.
\end{abstract}

%%%%%%%%%%%%%%%%%%%%%%%%%%%%%%%%%%%%%%%%
\section{Introduction}

%%%%%%%%%%%%%%%%%%%%%%%%%%%%%%%%%%%%%%%%
\subsection{The relativistic Euler equations}
We consider the relativistic Euler equations
\eq{\label{EE}\alg{
\nabla^{\mu}T_{\mu\nu}&=0 \,,\quad 
T_{\mu\nu}=(\rho+p)u_{\mu}u_{\nu}+p g_{\mu\nu}\,,
}} 
 with a linear, barotropic equation of state $p=K\rho$ where $c_s=\sqrt K$ is the speed of sound of the fluid. Equations \eqref{EE} consistute one of the main matter models used in cosmology since the origin of its rigorous study over a century ago \cite{Friedman-1922}.  Depending on the value of $K$, the equations describes radiation ($K=1/3$), dust ($K=0$) or a fluid with speed of sound interpolating between both values. We consider the initial value problem for \eqref{EE} on the fixed cosmological spacetimes $ (M,g) $ 
\eq{\label{spacetime}
M=[t_{0},\infty)\times \mathbb T^3,  g=-dt\otimes dt+a(t)^2 \delta_{ij} dx^i\otimes dx^j,
}
where $(\mathbb T^3,\delta)$ is the standard 3-dimensional torus. Here $a:[t_{0},\infty)\rightarrow \mathbb R_+$ is the scale factor, which is a strictly increasing function of time. In the present paper, we restrict to a polynomial scale factor $a(t)=t^\alpha$ but we expect more general results could be derived in future work. We study the future global behaviour of fluids with sufficiently regular initial data given at $t_0>0$, i.e.~as $t\nearrow\infty$, which corresponds to the expanding direction of spacetime.

\subsection{Fluid stabilization}
The expansion of spacetime, $\dot a(t)>0$, generates a friction-type term in the equation for the fluid velocity (see \eqref{eq-v} below). In consequence, for sufficiently fast expansion and sufficiently small initial data, shock formation is suppressed during the evolution and the fluid has future global solutions. We refer to this phenomena as \emph{fluid stabilization}. This effect was first discovered in the work of Brauer, Rendall and Reula  \cite{BrauerRendallReula}. 
 It is known due to a series of rigorous studies  over the last 30 years, pioneered by the work of Rodnianski and Speck \cite{RodnianskiSpeck:2013}, that in the regime of accelerated expansion $\ddot a>0$, where the stabilizing effect is very strong, fluids stabilize for all values of $K\in [0,1/3]$\cite{Speck:2012,LubbeKroon:2013,HadzicSpeck:2015,Oliynyk:CMP_2016,Friedrich:2017,Mondal21}. This holds even in the superradiative regime \cite{OliynykBigK21} and other types of fluids \cite{LeFlochWei21,LiuWei:AHP_2021}. On the other hand, the expansion-driven damping effect may be counteracted by the speed of sound, which, if sufficiently large, can cause shock formation.  For $K>1/3$ numerical evidence suggests that instabilities occur \cite{BMO23}, while particular solutions are stable despite the high speeds of sound \cite{MO23,FMO24}.

Looking at slower expansion rates,  in the regime of linear expansion  where $\ddot a=0$, it is known that radiation fluids ($K=1/3$) develop shocks in finite time \cite{Speck:2013}. In contrast, subradiative fluids ($K<1/3$) remain stable in the linear regime \cite{FOW:CMP_2021}, even in the presence of backreaction under the fully nonlinear Einstein--Euler equations \cite{fajman2021slowly,fajman2023stability}. In conclusion, in the linear regime,  the future global regularity of the fluid  depends in a non-trivial manner on both the expansion rate and the speed of sound. The present paper aims to continue such an investigation into the more physically relevant decelerated regime.

\subsection{The decelerated regime}
The decelerated regime is characterized by $\ddot a(t)<0$ which corresponds in the polynomial class $a(t)=t^{\alpha}$ to $\alpha<1$. In the spatially compact case, as considered here, dispersion does not occur. In consequence, other mechanisms of decay are necessary to stabilize the fluid, such as dilution caused by the expansion. In the decelerated regime this dilution effect is weak. This manifests in the expansion-normalized relativistic Euler equations (see \eqref{eq-v-and-L} below) by the appearance of weakly-decaying error terms. For example,
\[
\p_tL= \frac{1}{t^\alpha}\partial_j v^j+ \cdots \,.
\]
 Such terms can only be bounded by terms of the form $t^{-\alpha} \sqrt{E}$, where $E$ is an appropriate Sobolev-type energy of the solution. Here a small polynomial decay of the energy is insufficient to make these terms integrable in time as $1-\alpha$ is a priori not small. An energy decay of at least $t^{-1+\alpha-\varepsilon}$ for some $\varepsilon>0$ is necessary to obtain integrability. This is in contrast to the simpler case of linear expansion, where $\alpha=1$ and hence any small polynomial decay for the energy would make these terms integrable. This slow decay is the first obstruction to prove the decay of the perturbations.

The second difficulty results from the speed of sound of the fluid. The origin of decay of perturbations
is the friction-type term in the equation for the expansion normalized fluid velocity:
\[
\p_t v^i=-\frac{\alpha(1-3K)}{t}v^i + \ldots \]
The coefficient of this term, which roughly corresponds to the rate of decay of perturbations, vanishes for radiation and is maximal for dust. This corresponds to the interpretation that fluids become more stable for smaller speeds of sound. To achieve a necessary decay rate the speed of sound needs to be sufficiently small. 

The balance between these two mechanisms is the key to understand the critical relation between the expansion rate and the speed of sound that allows for fluid stabilization.

\subsection{Main Result}
The main result of the present paper states that under the condition
\eq{\label{stab-cond}
K<1-\frac2{3\alpha} \,,
}
the quiet fluid solution
\eq{
(L_q,\vec v_q)=(L_0,\vec 0)\,,
}
where $L_0$ is a constant, is asymptotically stable as $t\nearrow \infty$. Here, $(L,v)$ respectively denote the expansion normalized energy density and fluid velocity (see \eqref{transformation} below). Our result is the first stability result in the decelerated regime in the spatially compact setting for fluids with non-vanishing speed of sound. Moreover, numerical studies in our complementary work  suggest that if condition \eqref{stab-cond} is violated, then the corresponding quiet fluid solutions are unstable \cite{fajman2024phase}. In this sense, our result presented here is presumably sharp and the curve $K(\alpha)=1-2/(3\alpha)$ marks the transition between the stable and the unstable regime. The precise formulation of the theorem is given in Theorem \ref{thm-stability} below. 

Our proof uses a formulation of the Euler equations in expansion-normalized variables $(L,v)$ and  $L^2$-based energy functionals. The key idea is the construction of a corrected $L^2$-energy, where the correction term generates a decay-inducing term for the spatial gradient of the expansion normalized energy density. This term complements the corresponding one for the expansion normalized fluid velocity and in total yields a decay-inducing term for the energy. However, at the same time, the time derivative of the corrected $L^2$-energy generates a positive term proportional to the velocity part, which reduces the corresponding negative term. The optimal choice for the coefficient of the correction term in the energy balances these two contributions to maximize the resulting energy decay. This process is precisely where the curve \eqref{stab-cond} arises as a sufficient criterion for stability.

In addition, our energy functional also requires key cancellations to occur so that we can obtain a closed energy estimate in regards to the regularity. These additional terms, however, do not interfere with the decay mechanism.

%%%%%%%%%%%%%%%%%%%%%%%%%%%%%%%%%%%%%%%%

\subsection{Fluids on linearly expansion spacetimes}
Uniform bounds on the fluid variables for the range $K\in(0,1/3)$ were first obtained in the irrotational case in \cite{FOW:CMP_2021}. This was later generalized to the general case in 
\cite{fajman2023stability} in the presence of non-trivial curvature and gravitational backreaction. In both cases a Fuchsian approach in combination with a transformation of variables was used to obtain sufficient bounds on the fluid variables. 

In Section \ref{sec:linear} we consider a fluid on a linearly expanding background spacetime. We demonstrate that the energy functional, which we discover in the decelerated regime, fulfils a sufficient energy estimate, after a small adaption to the case of linear expansion. This provides a significantly shorter proof of the key steps in the result \cite{fajman2023stability}. This observation suggests that this energy is indeed canonical for relativistic fluids in a cosmological scenario.

%%%%%%%%%%%%%%%%%%%%%%%%%%%%%%%%%%%%%%%%

\subsection{Some results on shock formation}
In Section \ref{sec:instability} we address the problem of shock formation for the special cases of dust $(K=0)$ and radiation $(K=1/3)$. In each case we show that there exists initial data arbitrarily close to the respective quiet fluid state, which leads to solutions that develop shocks in finite time. For radiation, this concerns all decelerated expansion rates $\alpha\leq 1$, which was first established in \cite{Speck:2013}. The proof given in the present article is constructive and makes use of characteristic coordinates. By contrast, the proof in \cite{Speck:2013} relies on the conformal invariance of the relativistic Euler equations when $K=1/3$ in order to apply the shock formation result of \cite{Christodoulou:2007}. For dust, our shock formation result concerns expansion rates $\alpha\leq1/2$, which complements the range $\alpha>1/2$ where stability is known \cite{Speck:2013}. Thus, we show that the phase transition for dust occurs at $\alpha=1/2$. Remarkably, in comparison with the location of the phase transition for fluids with $K>0$ at $\alpha=2(1-K)/3$, this suggests that there is no continuous limit in the fluid behaviour as $K\searrow 0$ for expansion rates $1/2<\alpha<2/3$. In that range dust is stable, while the results of \cite{fajman2024phase} suggest that for any $K>0$ the corresponding fluid is unstable.
%%%%%%%%%%%%%%%%%%%%%%%%%%%%%%%%%%%%%%%%

\subsection{Structure of the paper}
In Section \ref{sec:prel} we present the relativistic Euler equations in their standard form and introduce the expansion-normalized rescaling and the relevant equations of motion. In Section \ref{sec:energy} we present the energy functional and a correction mechanism, which obeys the key energy estimate for solutions of the expansion normalized system. This estimate is used in Section 
\ref{sec:main} to prove the main theorem of this paper, Theorem \ref{thm-stability}. This section also contains a corollary stating the decay rates in the original physical variables. In Section \ref{sec:linear} we adapt the energy functional to the case of linearly expanding spacetimes and give a new short proof of fluid stabilization for the case $K \in (0,1/3)$. Finally, in Section \ref{sec:instability} we prove shock formation for dust and radiation fluids in the ranges $\alpha\leq 1/2$ and $\alpha<1$, respectively.

%%%%%%%%%%%%%%%%%%%%%%%%%%%%%%%%%%%%%%%%

\section{Notation, coordinates and definitions}

We denote spacetime indices by Greek letters $ \mu, \nu, \dots $, while Roman letters $ i,j,\dots $ denote spacial indices. For spatial submanifolds $ \{t=\text{constant}\}\eqqcolon\Sigma_{t}\cong\mathbb{T}^{3} $ slices we choose standard Euclidean coordinates $ (x^{1},x^{2},x^{3}) $. For a vector field $ v $ on $ \Sigma_{t} $ with components $ v^{i} $ we  simply write
\begin{equation*}
	v_{i}\coloneqq \delta_{ij}v^{j} \text{\quad and \quad}\partial^{i}\coloneqq \delta^{ij}\partial_{j}.
\end{equation*}
In general, for a spacetime function $ f: M\to \mathbb{R} $ we  write
\begin{equation}
	\int f \coloneqq \int_{\mathbb{T}^{3}} f(\cdot,x)d^{3}x.
\end{equation}
To define higher-order norms, we adapt the following notation.

\begin{defn}[Spatial derivatives]
	Let $ f,g $ be functions, $ v,w $ be vector fields on $ \Sigma_{t} $ and $ \ell \in\mathbb{N} $. Then we write
	\begin{equation*}
		D^{\ell}_{k_{1}\cdots k_{\ell}}\coloneqq\partial_{k_{1}}\cdots \partial_{k_{\ell}}. 
	\end{equation*}
	In addition, we define the 
	product $ (\cdot,\cdot) $ via 
	\begin{equation*}
		(D^{\ell}f,D^{\ell}g)\coloneqq \partial_{k_{1}}\cdots\partial_{k_{\ell}}f\partial^{k_{1}}\cdots\partial^{k_{\ell}}g
	\end{equation*}
	as well as
	\begin{equation*}
		(D^{\ell}v,D^{\ell}w)\coloneqq \partial_{k_{1}}\cdots\partial_{k_{\ell}}v^{i}\partial^{k_{1}}\cdots\partial^{k_{\ell}}w_{i}.
	\end{equation*}
	Furthermore, we allow for constructions like
	\begin{equation*}
		(D^{\ell}v,D^{\ell+1}f)\coloneqq \partial_{k_{1}}\cdots\partial_{k_{\ell}}v^{i}\partial^{k_{1}}\cdots\partial^{k_{\ell}}\partial_{i}f.
	\end{equation*}
	In addition we denote $ |D^{\ell}\cdot|^{2}=(D^{\ell}\,\cdot\,,D^{\ell}\,\cdot) $, and $ D^{1} $ will simply be shortened to $ D $. 
\end{defn}

\begin{defn}[Norms]
	Let $ f $ be a function or tensor field on $ M $. We denote the inhomogeneous Sobolev-norm of order $ r $ by
	\begin{equation*}
		\|f\|^{2}_{H^{r}}\coloneqq \sum_{n=0}^{r}\int |D^{r}f|^{2}.
	\end{equation*}
	In addition, the Lebesgue-norm $ L^{2} $ is defined as 
	\begin{equation*}
		\|f\|_{L^{2}}^{2}\coloneqq \int |f|^{2}. 
	\end{equation*}
	
\end{defn}
In addition to standard Hölder- and Sobolev-type estimates we will frequently use the following.

\begin{lem}[Poincar\'e inequality]\label{Poincare}
	Let $ (\mathcal{M},\gamma) $ be a smooth, compact Riemannian manifold. Then
	\begin{equation*}
		\Big(\int_{\mathcal{M}} |f-\bar{f}|^{2}d\mu(\gamma)\Big)^{\frac{1}{2}}\lesssim \Big(\int |Df|^{2}d\mu(\gamma)\Big)^{\frac{1}{2}},
	\end{equation*}
	for all $ f\in H^{1}(\mathcal{M}) $, where $ \bar{f}=\frac{1}{\text{vol}(\mathcal{M},\gamma)}\int f d\mu(\gamma) $. 
\end{lem}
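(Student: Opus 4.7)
The plan is to prove this by the standard contradiction argument based on Rellich--Kondrachov compactness, which is the shortest route in the smooth compact setting and does not require setting up the spectral theory of the Laplace--Beltrami operator. Since replacing $f$ by $f-\bar f$ does not alter the right-hand side (the gradient kills constants) and makes the left-hand side equal to $\|f-\bar f\|_{L^2}$, it suffices to establish
\begin{equation*}
\|f\|_{L^2(\mathcal M,\gamma)} \lesssim \|Df\|_{L^2(\mathcal M,\gamma)}
\end{equation*}
for all $f \in H^1(\mathcal M)$ satisfying $\bar f = 0$.

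Suppose for contradiction that no such constant exists. Then there is a sequence $(f_n) \subset H^1(\mathcal M)$ with $\bar f_n = 0$, $\|f_n\|_{L^2} = 1$, and $\|Df_n\|_{L^2} < 1/n$. In particular $(f_n)$ is bounded in $H^1(\mathcal M)$. Since $(\mathcal M,\gamma)$ is smooth and compact, the Rellich--Kondrachov embedding $H^1(\mathcal M) \hookrightarrow L^2(\mathcal M)$ is compact, so after passing to a subsequence $f_n \to f$ strongly in $L^2(\mathcal M)$ for some $f \in L^2(\mathcal M)$. The $L^2$ limit satisfies $\|f\|_{L^2} = 1$ and $\int_{\mathcal M} f\, d\mu(\gamma) = 0$.

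To reach a contradiction I would show that $f$ is constant. Since $\|Df_n\|_{L^2} \to 0$, the sequence $Df_n$ converges to zero in $L^2$; combined with $f_n \to f$ in $L^2$ this gives $f_n \to f$ in $H^1(\mathcal M)$, so the weak gradient $Df$ is zero. On a connected compact Riemannian manifold this forces $f$ to be (a.e.) constant, and the mean-zero condition then gives $f \equiv 0$, contradicting $\|f\|_{L^2} = 1$. If the manifold has several connected components one argues componentwise, but in the application $(\mathcal M,\gamma) = (\mathbb T^3,\delta)$ is connected so this subtlety does not appear.

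The only genuine ingredient in this argument is the Rellich compactness of $H^1 \hookrightarrow L^2$ on smooth compact Riemannian manifolds, which is standard and follows from the Euclidean case via a finite partition of unity adapted to coordinate charts together with the uniform equivalence of the intrinsic $H^1$ norm with the sum of the local Euclidean $H^1$ norms on chart domains (with bounds depending only on $\gamma$ through a finite atlas, hence uniform by compactness). Once compactness is granted, the contradiction argument above is entirely soft and requires no computation, which is the main reason for preferring it here over a spectral proof.
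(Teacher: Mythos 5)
Your argument is correct, and it is worth noting that the paper itself gives no proof of this lemma at all: it is quoted as a standard fact with a pointer to \cite[Theorem 2.10]{Hebeynonlinear}, whose proof is in essence the same Rellich--Kondrachov contradiction argument you outline. So your proposal fills in exactly the standard argument behind the citation: reduce to mean-zero $f$, take a contradicting sequence with $\|f_n\|_{L^2}=1$, $\|Df_n\|_{L^2}\to 0$, extract an $L^2$-convergent subsequence by compactness of $H^1\hookrightarrow L^2$, pass the mean-zero condition and the vanishing gradient to the limit, and conclude the limit is a constant of zero mean, contradicting unit norm. The one point to be careful about is connectedness: as literally stated, with $\bar f$ the \emph{global} mean, the inequality is false on a disconnected compact manifold (take $f$ locally constant with different values on two components, so $Df=0$ but $f\neq\bar f$), and the componentwise argument you mention proves a different statement with componentwise means. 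So the lemma should be read with the tacit assumption that $\mathcal M$ is connected -- as you correctly observe, this is harmless here since the only application is to $(\mathbb T^3,\delta)$. With that reading, your proof is complete and is, if anything, more self-contained than the paper's treatment, modulo the appeal to Rellich compactness on compact manifolds, which you reduce to the Euclidean case by a finite atlas and partition of unity in the standard way.
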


For a proof of Lemma \ref{Poincare}, see e.g.~\cite[Theorem 2.10]{Hebeynonlinear}. 

\section{Preliminaries}\label{sec:prel}

We consider the relativistic Euler equations \eqref{EE} on fixed cosmological FLRW spacetimes of the form \eqref{spacetime}. The equations take the explicit form
\begin{subequations}\label{Euler2}
\begin{align}
u^\mu \nabla_\mu \rho +(\rho+P)\nabla_\mu u^\mu &= 0,\label{Euler2_a}\\
(\rho+P)u^\mu\nabla_\mu u^\nu + (g^{\mu\nu}+u^\mu u^\nu ) \partial_\mu P &= 0, \quad \nu\in\{0,1,2,3\},\label{Euler2_b}
\end{align}
\end{subequations}
where the metric $g$ is given by \eqref{spacetime} and $\nabla$ denotes its Levi-Civita covariant derivative. The equation of state is used to replace the pressure via $
P=K\rho$. 
We then introduce expansion-normalized variables $(L,v)$  by
\eq{\label{transformation}
v^i =\frac{t^\alpha u^i}{\sqrt{1+t^{2\alpha}u^2}}\qquad\mbox{and}\qquad L=\log \left( t^{3\alpha(1+K)}\rho \right).
}
For the time-component of the fluid 4-velocity we use the normalization condition $u^\mu u_\mu=-1$ to derive
\eq{
u^0=\sqrt{1+t^{2\alpha}|u|^2}=\frac{1}{\sqrt{1-|v|^2}}\,.
}
In the expansion-normalized variables the system \eqref{EE} takes the following form.

\begin{subequations}\label{eq-v-and-L}
\eq{\label{eq-v}\alg{
\p_t v^i&=-\frac{\alpha(1-3K)}{t}v^i - \frac{K}{1+K}\frac{t^{1-\alpha}}{t}(1-v^2)\p_i L-\frac{t^{1-\alpha}}tv^j\p_j v^i
 +\frac{t^{1-\alpha}}{t}\left(1-\frac{1-K}{1-Kv^2}\right)v^i\p_j v^j
 \\
&\quad+\frac{t^{1-\alpha}}{t}\frac{1-K}{1+K}\left(1-\frac{1-K}{1-Kv^2}\right)v^i v^j\p_j L
+\frac{\alpha(1-3K)}{t}\frac{1-K}{1-Kv^2}v^2 v^i,
}}
\eq{\label{eq-L}\alg{
\p_tL&=-\frac{t^{1-\alpha}}{t}\frac{(1+K)}{1-Kv^2}\partial_j v^j-\frac{t^{1-\alpha}}{t}\frac{(1-K)}{1-Kv^2} v^j\p_j L+\frac{\alpha(1+K)}{t}\frac{1-3K}{1-Kv^2}v^2.
}}
\end{subequations}

The following identity is used later on:
\eq{\label{id-K}
1-\frac{1-K}{1-Kv^2}=K\frac{1-v^2}{1-Kv^2}.
}

We make the following choices and bootstrap assumptions for convenience:
\eq{\alg{\notag
&t>1\\
&|v|<1/10
}}

In the following sections we consider a solution $(L,v)$ emanating from initial data $(L_0,v_0)\in H^{N+1}(\mathbb T^3)\times H^{N+1}(\mathbb T^3)$ with $\|L_0-L_q\|_{H^{N+1}}+\|v_0\|_{H^{N+1}}<\varepsilon$ for $N \geq 6$ and $\varepsilon>0$ sufficiently small. Standard local-existence theory implies existence and uniqueness of a local-in-time solution to \eqref{eq-v-and-L}
\eq{\alg{
L:(t_0,t_1)\times \mathbb T^3&\rightarrow \mathbb R\\
v:(t_0,t_1)\times \mathbb T^3&\rightarrow \mathbb R^3,
}}
which can be extended beyond any $t_1>t_0$ if its Sobolev norm of order $N+1$ remains bounded as $t\nearrow t_1$ \cite{Speck:2013}.

%%%

\section{Energy estimates and the correction mechanism}
\label{sec:energy}

In this section we derive energy estimates which are the key to establish decay estimates for perturbations of $(L_0,\vec 0)$ for the lowest order of regularity. Subsequent higher order energy estimates follow the same structure. The mechanism to establish decay is apparent already at lowest order.

We begin with an estimate on the mean value of the velocity. Using the Poincar\'e inequality 
\eq{\notag
\|f-\bar f\|_{L^2}\leq C\|D f\|_{L^2}
}
where $f\in H^1(\mathbb T^3)$, we are able to control the $L^2$-norm of the velocity. We then define and analyze the $H^1$-energy of the variables $L$ and $\rho$, which obeys an energy estimate that is closed in regularity  as no higher order derivatives appear on the right-hand side. Ensuring that the estimate is closed in regularity is delicate. To upgrade this energy to a decay capturing energy functional we add a suitable correction term and establish the energy estimate for the corrected $H^1$-energy. 
We then commute the equations \eqref{eq-v}, \eqref{eq-L} with spatial derivatives to observe that the resulting equations have a similar structure to the uncommuted equations. This implies that, beside error terms, higher order energies fulfil identical estimates. Finally, we show that the higher order corrected energies are in fact coercive and hence control the solutions in the required Sobolev regularity.

\subsection{Evolution of the mean velocity}

We define the mean velocity field $\bar v^i$ by
\eq{
\overline v^i=\int v^i \,.
}
The mean velocity field obeys the following evolution equation.

\begin{lem}\label{lem-meanvelocity}
Let $i\in\{1,2,3\}$ and $v:(t_0,t_1)\rightarrow \mathbb R^3$ be a solution to \eqref{eq-v}. Then
\eq{\label{eq-mv}
\p_t\overline v^i=-\frac{\alpha(1-3K)}{t}\overline v^i+g^i(t),
}
where $g^i:(t_0,t_1)\times \mathbb R^3\rightarrow \mathbb R^3$ for $i\in\{1,2,3\}$ and
\eq{
|g(t,.)|\leq C\frac{1+t^{1-\alpha}}{t} p (\overline v^j,\|D v\|_{H^1},\|D v\|_{L^{\infty}},\|D L\|_{L^2})\,,
}
where $p(\cdot)$ is a polynomial with terms of at least second order (i.e.~vanishing zeroth and first order).
\end{lem}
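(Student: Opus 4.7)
The plan is to integrate the evolution equation \eqref{eq-v} over $\mathbb T^3$ and read off the equation for $\bar v^i$ directly. Since $\mathbb T^3$ is compact and there are no boundary terms, $\partial_t \bar v^i = \int \partial_t v^i$. Integration produces the leading friction term $-\tfrac{\alpha(1-3K)}{t}\bar v^i$ verbatim from the first term of \eqref{eq-v}; we then \emph{define} $g^i(t)$ as the integral of the remaining five terms of \eqref{eq-v}, and the entire task is to bound it.

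Next, I would process each of those five terms using the identity \eqref{id-K} to rewrite the factors $1-\tfrac{1-K}{1-Kv^2}$ as $K\tfrac{1-v^2}{1-Kv^2}$, which explicitly isolates the extra $v^2$-factor needed for quadratic smallness. After this rewriting, four of the five terms (the transport term $v^j\p_j v^i$, the term $K\tfrac{1-v^2}{1-Kv^2}v^i\p_jv^j$, the term $\tfrac{K(1-K)(1-v^2)}{(1+K)(1-Kv^2)}v^iv^j\p_jL$, and the term $\tfrac{1-K}{1-Kv^2}v^2v^i$) are manifestly products of at least two small factors drawn from $\{v, DL, Dv\}$ and can be bounded pointwise in obvious ways. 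The one term that is only formally linear, namely $-\tfrac{K}{1+K}\tfrac{t^{1-\alpha}}{t}(1-v^2)\p_iL$, is handled by splitting off $\int \p_iL = 0$ (periodicity), leaving only $\tfrac{K}{1+K}\tfrac{t^{1-\alpha}}{t}\int v^2\p_iL$, which is cubic.

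To convert the pointwise products into bounds involving exactly the norms $|\bar v^j|,\|Dv\|_{H^1},\|Dv\|_{L^\infty},\|DL\|_{L^2}$ listed in the lemma, I would write $v = \bar v + (v-\bar v)$ everywhere. Any occurrence of a constant $\bar v^j$ multiplying a pure spatial derivative integrates to zero by periodicity (this is what kills the formally linear piece hidden in $\int v^j\p_jv^i$, namely $\bar v^j\int\p_jv^i=0$); what remains has at least one factor of $v-\bar v$, which is controlled in $L^p$ by $\|Dv\|_{L^2}$ via the Poincar\'e inequality of Lemma \ref{Poincare}, and $\|v-\bar v\|_{L^\infty}$ is controlled by $\|Dv\|_{H^1}$ via Sobolev embedding $H^2(\mathbb T^3)\hookrightarrow L^\infty$. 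After H\"older, every resulting term is a monomial of order $\geq 2$ in the four listed norms, as required.

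Finally, the time prefactors in \eqref{eq-v} are of two types: $t^{-1}$ (from the friction and the last term) and $t^{-\alpha}=t^{1-\alpha}/t$ (from the transport and gradient-of-$L$ terms). Both are dominated by $(1+t^{1-\alpha})/t$ up to a universal constant, which gives the stated prefactor. Collecting the monomial bounds into a single polynomial $p$ with vanishing $0$th and $1$st order parts completes the estimate. The main, very mild, obstacle is bookkeeping: one must be careful to apply the periodicity cancellation $\int\p_i(\,\cdot\,)=0$ to the exactly two places where a naive bound would leave a linear-order term (the $\p_iL$ term, and the constant $\bar v^j$ piece of $\int v^j\p_jv^i$). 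Once those two cancellations are observed, the rest is a direct application of H\"older, Poincar\'e and Sobolev.
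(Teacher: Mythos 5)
Your proposal is correct and is essentially the paper's own proof: integrate \eqref{eq-v} over the torus, keep the friction term, kill the only genuinely linear contribution via $\int\partial_i L=0$, and estimate the remaining terms after decomposing $v=\bar v+(v-\bar v)$ using Poincar\'e, H\"older and Sobolev embedding, with the time prefactors absorbed into $(1+t^{1-\alpha})/t$. One minor misstatement: the identity \eqref{id-K} does not make the coefficient $1-\tfrac{1-K}{1-Kv^2}$ small (it equals $K$ at $v=0$), but this is harmless because, as your argument in fact uses, the required second-order smallness comes from the explicit $v$, $Dv$ and $DL$ factors in those terms (and constant $\bar v$ pieces against non-constant coefficients need not vanish by periodicity --- they are already of second order).
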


\begin{proof}
The time derivative of the mean velocity field is derived using \eqref{eq-v}. We find
\eq{\alg{\notag
\p_t\overline v^i&=-\frac{\alpha(1-3K)}{t}\overline v^i-\frac{K}{1+K}t^{-\alpha}\left(\int \p_i L  -\int v^2\p_i L \right)\\
&\qquad-t^{-\alpha}\int v^j\p_jv^i  +t^{-\alpha}\int (1-\frac{1-K}{1-Kv^2})v^i\p_jv^j  \\
&\qquad+t^{-\alpha}\frac{1-K}{1+K}\int (1-\frac{1-K}{1-Kv^2})v^iv^j\p_j L  +\frac{\alpha(1-3K)}{t}\int \frac{1-K}{1-Kv^2}v^2 v^i  \,.
}}
The first term in the bracket in the first line vanishes due to integration by parts. For terms involving $v^i$-factors we decompose by $v^i=(v^i-\overline v^i)+\overline v^i$. For instance, for the second term in the bracket, we have
\eq{\alg{\notag
\int v^2\p_i L  &=\int (v_j-\overline v_j)(v^j-\overline v^j)\p_i L  +2\overline v^j\int (v_j-\overline v_j)\p_i L  +\overline v^j \overline v_j\int\p_i L   \,.
}}
The last term of this decomposition vanishes using integration by parts. The other terms in the decomposition can be estimated using the Poincar\'e inequality. We apply this argument to all terms except to the first one, which is listed explicitly in \eqref{eq-mv}.
\end{proof}
%%%

\subsection{$L^2$-energy}
In this section we define the canonical $L^2$-energy for solutions $(L,v)$ of \eqref{eq-v-and-L}. The specific structure of the energy is chosen so that it obeys an energy estimate which is closed in regularity.

\begin{defn}[$L^2$-energy]
The $L^2$-energy of a solution $(v,L)$ is defined as 
\eq{\label{def-en}\alg{
E_0[v,L]&=\frac12\int \partial_i v^j\partial^iv_j  +\frac{K}{1+K}\int v_m\p_iv^m\p ^iL \\
& \qquad+\frac12\int\frac{1}{1-v^2} v_m\p_i v^m v_n\p^i v^n  +\frac12\frac K{(1+K)^2}\int(1-v^2) \p^iL\p_iL \,.
}}

\end{defn}

In the derivation of the energy estimates for \eqref{def-en} the error terms are of the following type.

\begin{defn}[Perturbative term]
We call a term $f(t)$ \emph{perturbative}, if it can be bounded by
\eq{\label{f-est}
|f(t)| \leq C\frac{1+t^{1-\alpha}}{t} p(\|D L \|_{L^2},\|D v \|_{L^2},\|v\|_{L^{\infty}},\|D v\|_{L^{\infty}}),
}
where $p(\cdot)$ is a polynomial with lowest order terms of at least degree 3.
\end{defn}
We compute the time-derivatives of the individual terms of the energy in the following, keeping track of all important terms, while absorbing the remaining expressions into a perturbative term.

\begin{lem}\label{lem-en-p1}
Let $v$ be a solution to \eqref{eq-v}. Then, the following identity holds:
\eq{\alg{\notag
\p_t\Big(\frac12\int \partial_i v^j\partial^iv_j \Big) &=-\frac{\alpha(1-3K)}{t}\int \partial_i v^j\partial^iv_j  -\frac{K}{1+K}t^{-\alpha}\int \p_i v^j((1-v^2)\p^i\p_j L) \\
&+t^{-\alpha}\int \p^iv_j\left((1-\frac{1-K}{1-Kv^2})v^j \p_i\p_kv^k\right) \\
&+t^{-\alpha}\frac{1-K}{1+K}\int \p^i v_j\left((1-\frac{1-K}{1-Kv^2})v^j v^k\p_i\p_k L\right) +f(t)\,,
}}
where $f(t)$ is a perturbative term. 
\end{lem}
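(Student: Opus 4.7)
\smallskip

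The plan is to differentiate under the integral, commute $\p_m$ with $\p_t$, and substitute the evolution equation \eqref{eq-v} for $\p_t v^l$:
\begin{equation*}
\p_t\Big(\tfrac{1}{2}\int \p_m v^l\,\p^m v_l\Big) = \int \p^m v_l\,\p_m(\p_t v^l).
\end{equation*}
Equation \eqref{eq-v} has six terms on the right-hand side; I would distribute $\p_m$ across each via Leibniz and classify the resulting contributions as either \emph{principal} (matching the four explicit terms in the claim) or \emph{perturbative}. The principal contribution arises precisely when $\p_m$ falls on the highest-derivative factor: on $v^l$ itself in the friction term, on $\p^l L$ in the pressure-gradient term, on $\p_k v^k$ in the compression-type term, and on $\p_k L$ in the last velocity--pressure coupling term. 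The friction term $-\alpha(1-3K)t^{-1}v^l$ generates the first term of the claim directly, while the other three generate the remaining principal terms after routine relabeling of dummy indices and use of $\p^m v_l=\p_m v^l$ on Euclidean space.

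For each of these four terms, the alternative placements of $\p_m$ — either on the coefficient (e.g.\ on $1-v^2$, on $1-\tfrac{1-K}{1-Kv^2}$, or on the factor $v^l$) — produce contributions containing at least one extra factor of $\p_m v$ and no second-order derivative. Each such term is a product of a smooth bounded function of $v$ (controlled using the bootstrap $|v|<1/10$) with at least three factors chosen from $\{Dv, DL, v\}$, and is estimated pointwise and then integrated against $L^2$-$L^2$-$L^\infty$ Hölder: for instance the ``$-2v^k\p_m v_k\,\p^l L$'' piece from differentiating $(1-v^2)\p^l L$ yields a term bounded by $\|v\|_{L^\infty}\|Dv\|_{L^\infty}\|Dv\|_{L^2}\|DL\|_{L^2}$, which fits the perturbative template \eqref{f-est} with the $t^{-\alpha}=t^{1-\alpha}/t$ prefactor already present.

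The one genuinely delicate point — and the step I expect to be the main obstacle — is the advection term $-t^{-\alpha}v^k\p_k v^l$. Differentiating produces
\begin{equation*}
-t^{-\alpha}\int \p^m v_l\,\p_m v^k\,\p_k v^l \;-\; t^{-\alpha}\int v^k\,\p^m v_l\,\p_m\p_k v^l,
\end{equation*}
where the second integrand formally carries two spatial derivatives on $v$ and would destroy closedness in regularity. The fix is to observe $\p^m v_l\,\p_m\p_k v^l=\tfrac{1}{2}\p_k(\p^m v_l\,\p_m v^l)$ and integrate by parts in $x^k$, which converts the offending expression into $\tfrac{1}{2}t^{-\alpha}\int \p_k v^k\,|Dv|^2$, a cubic term in $Dv$ of perturbative type. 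The sixth term of \eqref{eq-v}, involving $(1-K)(1-Kv^2)^{-1}v^2 v^l$, contains no first-derivative factor by itself; every Leibniz placement of $\p_m$ produces at least one $v$ factor and one $\p v$ factor, and combined with $\p^m v_l$ the outcome is a product of at least two $v$'s and two $Dv$'s, hence perturbative with no principal contribution to retain.

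Putting it all together, after identifying the four principal terms and verifying via Hölder that every subsidiary term is a polynomial of degree $\geq 3$ in $(\|DL\|_{L^2},\|Dv\|_{L^2},\|v\|_{L^\infty},\|Dv\|_{L^\infty})$ with prefactor $\lesssim (1+t^{1-\alpha})/t$, the identity of the lemma follows. The only subtlety beyond bookkeeping is the integration-by-parts trick for the transport term, which is the mechanism ensuring that second-order derivatives appear in the claim only through the three explicit principal combinations and not via hidden higher-order pieces of the advection.
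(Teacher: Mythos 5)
Your proposal is correct and follows essentially the same route as the paper: substitute \eqref{eq-v} after differentiating under the integral, keep exactly the four contributions where the derivative lands on the highest-order factor, handle the advection term by rewriting $\p^i v_j\,v^k\p_k\p_i v^j=\tfrac12 v^k\p_k(|Dv|^2)$ and integrating by parts, and absorb everything else into the perturbative term via H\"older with the bootstrap bound on $|v|$. The paper's proof is just a terser version of this same bookkeeping, including the identical integration-by-parts trick for the transport term.
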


\begin{proof}
A straightforward computation using \eqref{eq-v} yields the following identity.
\eq{\alg{\notag
\p_t\frac12\int \partial_i v^j\partial^iv_j  &=-\frac{\alpha(1-3K)}{t}\int \partial_i v^j\partial^iv_j  -t^{-\alpha}\int \p^iv_j\p_i(v^k\p_k v^j) \\
&-\frac{K}{1+K}t^{-\alpha}\int \p_i v^j\p^i((1-v^2)\p_j L) \\
&+t^{-\alpha}\int \p^iv_j\p_i\left((1-\frac{1-K}{1-Kv^2})v^j \p_kv^k\right) \\
&+t^{-\alpha}\frac{1-K}{1+K}\int \p^i v_j\p_i\left((1-\frac{1-K}{1-Kv^2})v^j v^k\p_k L\right) \\
&+\frac{\alpha(1-3K)}{t}\int \p_i v^j\p_i\left(\frac{1-K}{1-Kv^2}v^2 v^j\right)  \,.
}}
The second term in the first line on the right-hand side is perturbative after commuting to obtain $\partial^iv_jv^k\partial_k\partial_i v^j$ and applying integration by parts. The term in the last line is immediately perturbative. In all remaining terms we evaluate the derivatives $\p_i$ and keep only the terms with second order derivatives, while absorbing the perturbative terms into $f(t)$. The resulting explicit terms give the identity in the lemma.
\end{proof}

\begin{lem}\label{lem-en-p2}
Let $(v,L)$ be a solution to \eqref{eq-v-and-L}. Then, the following identity holds:
\eq{\alg{\notag
\p_t\Big(\frac12\frac K{(1+K)^2}\int(1-v^2) \p^iL\p_iL \Big) &= -t^{-\alpha}\frac{K}{(1+K)}\int\frac{1-v^2}{1-Kv^2}\p^iL\p_i\p_jv^j +f(t)\,,
}}
where $f(t)$ is a perturbative term. 
\end{lem}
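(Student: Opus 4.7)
The plan is to differentiate under the integral sign and use the product rule, then substitute \eqref{eq-v}--\eqref{eq-L} for the time derivatives of $v^m$ and $L$. Concretely, write
\begin{equation*}
\p_t\!\left(\tfrac12\tfrac K{(1+K)^2}\!\int(1-v^2)\p^iL\p_iL\right) = \tfrac{K}{(1+K)^2}\!\int\!\left[-v_m\p_tv^m\,\p^iL\p_iL + (1-v^2)\p^iL\,\p_i\p_tL\right].
\end{equation*}
I will then identify the unique leading term (the one containing two spatial derivatives of $v$) and absorb all remaining contributions into the perturbative term $f(t)$.

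First I would handle the variation of the factor $(1-v^2)$. Substituting \eqref{eq-v} for $\p_tv^m$, every resulting term carries an overall prefactor of either $\alpha/t$ or $t^{-\alpha}$ and contains the factor $v_m\,\p^iL\,\p_iL$. Each such term is at least cubic in the perturbation and contains no second-order spatial derivatives, so it fits the perturbative bound \eqref{f-est}; the derivative term $v^j\p_jv^i$ in \eqref{eq-v} gives $v_m v^j\p_jv^m\p^iL\p_iL$, which is already $L^\infty$-controlled and cubic. Thus this whole line goes into $f(t)$.

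Next I would handle the $\p_i\p_tL$ contribution, substituting \eqref{eq-L}. The derivative $\p_i$ distributes by Leibniz; the only contribution retaining two derivatives on $v$ comes from hitting $\partial_jv^j$ directly in the first term of \eqref{eq-L}, which produces
\begin{equation*}
-\tfrac{K}{(1+K)^2}\!\int(1-v^2)\p^iL\cdot t^{-\alpha}\tfrac{1+K}{1-Kv^2}\p_i\p_jv^j = -t^{-\alpha}\tfrac{K}{1+K}\!\int\tfrac{1-v^2}{1-Kv^2}\p^iL\,\p_i\p_jv^j,
\end{equation*}
which is precisely the displayed leading term. Every other contribution from $\p_i\p_tL$ either places the derivative on one of the coefficients $(1-Kv^2)^{-1}$, $v^j$, or on the $v^2$-factor in the last term of \eqref{eq-L}; each such term gains an extra $\p v$ or $\p L$ (so becomes cubic in the perturbations and multiplied by the extant $(1-v^2)\p^iL$) and carries the required $t^{-\alpha}/t$ or $1/t$ weight, fitting the perturbative bound.

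The main obstacle, such as it is, is bookkeeping: one must verify that each absorbed term is at least of total degree three in $(\p L,\p v,v)$ and carries a time weight controlled by $(1+t^{1-\alpha})/t$, so that it genuinely matches \eqref{f-est}. The cubic structure is automatic because the integrand of the energy is quadratic in $(\p L, v)$ and each substituted nonlinear term in \eqref{eq-v-and-L} adds at least one additional factor; the time-weight is automatic from the explicit prefactors in \eqref{eq-v-and-L}. Putting this together yields the stated identity.
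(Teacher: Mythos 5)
Your overall strategy (differentiate under the integral, substitute \eqref{eq-v} for the variation of the $(1-v^2)$ factor and \eqref{eq-L} for $\partial_i\partial_t L$, keep only the term with $\partial_i\partial_j v^j$) is the same as the paper's, and your identification of the leading term and its coefficient is correct. However, there is a genuine gap in your bookkeeping of the $\partial_i\partial_tL$ contributions. You claim that every contribution other than the displayed one ``places the derivative on one of the coefficients $(1-Kv^2)^{-1}$, $v^j$, or on the $v^2$-factor'' and hence only gains an extra first derivative. This enumeration omits the case in which $\partial_i$ falls on $\partial_jL$ in the transport term $-\tfrac{t^{1-\alpha}}{t}\tfrac{1-K}{1-Kv^2}v^j\partial_jL$ of \eqref{eq-L}. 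That case produces (up to the constant $\tfrac{K}{(1+K)^2}$ and the factor $t^{-\alpha}$) the term
\begin{equation*}
-t^{-\alpha}\,\frac{K(1-K)}{(1+K)^2}\int (1-v^2)\,\frac{1}{1-Kv^2}\,v^j\,\partial^iL\,\partial_i\partial_jL \,,
\end{equation*}
which contains a \emph{second} spatial derivative of $L$. Such a term cannot be bounded by the norms $\|DL\|_{L^2},\|Dv\|_{L^2},\|v\|_{L^\infty},\|Dv\|_{L^\infty}$ appearing in the perturbative bound \eqref{f-est}, so it is not perturbative as it stands; declaring it absorbed into $f(t)$ is exactly the point at which the lemma's ``closed in regularity'' property would be lost.

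The repair is the one the paper uses: rewrite $v^j\,\partial^iL\,\partial_i\partial_jL=\tfrac12\,v^j\,\partial_j\bigl(\partial^iL\,\partial_iL\bigr)$ and integrate by parts over $\mathbb T^3$, moving $\partial_j$ onto $v^j$ and the coefficient $(1-v^2)(1-Kv^2)^{-1}$; the result is cubic with only first-order derivatives and the correct time weight, hence genuinely perturbative. You should add this step explicitly (and note that the analogous integration-by-parts is what justifies discarding the symmetric-looking transport contributions throughout the energy computation); with it, your argument matches the paper's proof.
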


\begin{proof}
We use \eqref{eq-v} for the first factor yielding only perturbative terms, which are not explicitly listed below, but absorbed into $f(t)$. The other terms are given explicitly using \eqref{eq-L}.
\eq{\alg{\notag
&\p_t\Big(\frac12\frac K{(1+K)^2}\int(1-v^2) \p^iL\p_iL \Big) \\
&\qquad= \frac{K}{(1+K)^2}\Big[-t^{-\alpha}\int(1-v^2)\p^iL\p_i\left(\frac{1+K}{1-Kv^2}\p_jv^j\right) \\
&\qquad\quad-t^{-\alpha}\int(1-v^2)\p^iL\p_i\left(\frac{1-K}{1-Kv^2}v^j\p_jL\right) \\
&\qquad\quad+\frac{\alpha(1+K)}{t}\int(1-v^2)\p^iL\p_i\left(\frac{1-3K}{1-Kv^2}v^2\right) \Big]+f(t)
}}
The derivative $\p_i$ in the first term is evaluated and only the terms with second derivatives are kept explicitly providing the explicit term in the Lemma. The second term is perturbative by an integration by parts argument. The third term is perturbative immediately.
\end{proof}

\begin{lem}\label{lem-en-p3}

Let $(v,L)$ be a solution to \eqref{eq-v-and-L}. Then, the following identity holds:
\eq{\alg{\notag
\p_t\Big(&\frac{K}{1+K}\int v_m\p_iv^m\p ^iL\Big) \\
&=\frac{K}{1+K}\Big[-\frac{\alpha(1-3K)}{t}\int v^m\p^iL\p_i v_m +t^{-\alpha}\int v^m\p_iv_m\p^i(v^j\p_jL)\left(1-\frac{1-K}{1-Kv^2}\right) \\
&\qquad+t^{-\alpha}\int v^2\p^iL\p_i\p_jv^j\left(1-\frac{1-K}{1-Kv^2}\right) -t^{-\alpha}\int v_m\partial^iv^m\p_i\p_jv^j\left(\frac{1+K}{1-Kv^2}\right) \Big]+f(t)\,,
}}
where $f(t)$ is a perturbative term. 
\end{lem}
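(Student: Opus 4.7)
The plan is to apply the product rule to $\p_t\big(\int v_m\p_i v^m\p^iL\big)$, producing three integrals
$I_1 = \int(\p_t v_m)\p_i v^m\p^iL$,
$I_2 = \int v_m\p_i(\p_t v^m)\p^iL$ and
$I_3 = \int v_m\p_iv^m\p^i(\p_tL)$.
I substitute the velocity equation \eqref{eq-v} into $I_1$ and $I_2$ and the density equation \eqref{eq-L} into $I_3$, and at each step I extract precisely those contributions that feature a \emph{second} spatial derivative of $v$ or $L$ (the terms that cannot be handled by an $H^1$-type bound alone and must therefore be kept explicit), sweeping everything else into the remainder $f(t)$. This is the same bookkeeping philosophy as in Lemmas \ref{lem-en-p1}--\ref{lem-en-p2}.

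From $I_1$, only the linear friction $-\tfrac{\alpha(1-3K)}{t}v_m$ in \eqref{eq-v} produces a first-derivative expression with a $t^{-1}$ prefactor that is not automatically perturbative; it yields exactly the first listed term $-\tfrac{\alpha(1-3K)}{t}\int v^m\p^iL\p_iv_m$. Every other term coming from \eqref{eq-v} either has only first derivatives of $v$ and $L$ together with an extra factor of $v$, so it is cubic in the energy variables with a $t^{-\alpha}/t$ prefactor, or carries a $\big(1-\tfrac{1-K}{1-Kv^2}\big)$ factor which, by the identity \eqref{id-K}, supplies an additional $v^2$ and pushes the degree beyond three. In $I_2$, the derivative $\p_i$ must be distributed across the right-hand side of \eqref{eq-v}: when it lands on a coefficient in $v$ the resulting expression involves only first derivatives and is perturbative, while the terms where $\p_i$ actually produces a second derivative are exactly the ones listed. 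The correction term $\big(1-\tfrac{1-K}{1-Kv^2}\big)v^m v^j\p_jL$ contributes the second explicit term of the lemma, in which the combination $\p^i(v^j\p_jL)$ is kept intact, and the correction term $\big(1-\tfrac{1-K}{1-Kv^2}\big)v^m\p_jv^j$ contributes the third explicit term. Finally, substituting \eqref{eq-L} into $I_3$, the only second-derivative contribution comes from the $-\tfrac{t^{1-\alpha}}{t}\tfrac{1+K}{1-Kv^2}\p_jv^j$ piece, giving the fourth explicit term; the $v^j\p_jL$ and the algebraic $v^2$ pieces in \eqref{eq-L} contribute only to $f(t)$, in the latter case after commuting $\p^i$ onto lower-order factors.

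The main obstacle is verifying that every term absorbed into $f(t)$ actually satisfies the bound \eqref{f-est}: no second derivatives of $v$ or $L$ are allowed, and the product of factors has to be of at least cubic degree in $(v, Dv, DL)$. The most delicate cases are the products in which a first derivative of $v$ multiplies a first derivative of $L$ together with a single $v^j$ prefactor — these are only borderline cubic, and one relies on the identity \eqref{id-K} to rewrite $1-\tfrac{1-K}{1-Kv^2}$ as $K\tfrac{1-v^2}{1-Kv^2}$ so that an extra factor of $v^2$ is exposed and the cubic threshold is cleared with margin. A secondary care point is that some residual second-derivative contributions must be moved off onto lower-order factors by integration by parts before they can be bounded in $L^2$ by the product norm $\|Dv\|_{L^\infty}\|Dv\|_{L^2}\|DL\|_{L^2}$ (or an analogue); once all such manipulations are performed, the four listed explicit terms remain and every other contribution manifestly fits the perturbative template.
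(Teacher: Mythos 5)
Your overall skeleton (product rule into $I_1,I_2,I_3$, substitution of \eqref{eq-v}--\eqref{eq-L}, keeping second-derivative terms and absorbing the rest) is the same as the paper's, but your bookkeeping of which second-derivative contributions survive is wrong in a way that breaks the argument. First, the second listed term does not come from the $\left(1-\frac{1-K}{1-Kv^2}\right)v^iv^j\p_jL$ term of \eqref{eq-v}: inserting that term into $I_2$ gives $\frac{1-K}{1+K}t^{-\alpha}\int v^2\left(1-\frac{1-K}{1-Kv^2}\right)\p^iL\,\p_i(v^j\p_jL)$, whose structure is $v^2\,\p^iL(\cdots)$ rather than $v^m\p_iv_m(\cdots)$; its second-derivative part is $\tfrac12\int v^2(\cdots)v^j\p_j\left(\p^iL\p_iL\right)$ and is perturbative after a single integration by parts (it is one of the terms the paper marks $(i)$). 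Second, and more seriously, your claim that the $-\frac{t^{1-\alpha}}{t}\frac{1-K}{1-Kv^2}v^j\p_jL$ piece of \eqref{eq-L} ``contributes only to $f(t)$'' is false: in $I_3$ it produces $-t^{-\alpha}\int v_m\p^iv^m\,\frac{1-K}{1-Kv^2}v^j\p^i\p_jL$, which contains a second derivative of $L$ and hence cannot satisfy \eqref{f-est}; integrating by parts on its own merely trades it for a term containing $\p_i\p_j v^m$, which is equally inadmissible. Likewise the transport term $-\frac{t^{1-\alpha}}{t}v^j\p_jv^i$ of \eqref{eq-v}, which your proposal does not account for at all, contributes to $I_2$ the non-perturbative term $-t^{-\alpha}\int v^m\p^iL\,v^j\p_i\p_jv_m$.

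The missing idea is exactly the pairing the paper performs: integrating the $I_2$-transport contribution by parts (moving the derivative off $\p_i\p_jv_m$) converts it, up to perturbative terms, into $+t^{-\alpha}\int v^m\p_iv_m\,v^j\p^i\p_jL$, which combines with the $I_3$-contribution above to give precisely $t^{-\alpha}\int v^m\p_iv_m\,\p^i(v^j\p_jL)\left(1-\frac{1-K}{1-Kv^2}\right)$, the second explicit term of the lemma. Neither of the two pieces is controllable alone, so following your recipe literally either discards non-perturbative terms into $f(t)$ or yields an identity different from the stated one. (Your attribution of the first listed term to $I_1$ rather than to the friction term inside $I_2$ is harmless, since the two copies are identical integrands and each is itself perturbative; the third and fourth terms are attributed correctly.)
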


\begin{proof}
We use \eqref{eq-v} and \eqref{eq-L} to compute the time-derivatives of the second and first factor. By \eqref{eq-v} it is immediate that the terms resulting from the first factor are all perturbative.
\eq{\alg{\notag
\p_t&\Big(\frac{K}{1+K}\int v_m\p_iv^m\p ^iL\Big) \\
&=\frac{K}{1+K}\Big[-\frac{\alpha(1-3K)}{t}\int v^m\p^iL\p_i v_m  
-\underbrace{\frac{K}{1+K} t^{-\alpha}\int v^m\p^iL\p_i\left((1-v^2)\p_mL\right)}_{(i)} 
\\
&-\underbrace{t^{-\alpha}\int v^m\p^iL\p_i(v^j\p_jv_m)}_{(iii)} 
+\underbrace{t^{-\alpha}\int v^m\p^iL\p_i\left((1-\frac{1-K}{1-Kv^2})v_m\p_jv^j\right)}_{(iv)} \\
&
+\underbrace{t^{-\alpha}\int v^m\p^iL\frac{1-K}{1+K}\p_i\left((1-\frac{1-K}{1-Kv^2})v_mv^j\p_jL\right)}_{(i)} 
+\underbrace{\frac{\alpha(1-3K)}{t}\int v^m\p^iL\p_i\left(\frac{1-K}{1-Kv^2}v^2v_m\right)}_{(ii)} \\
&
-\underbrace{t^{-\alpha}\int v_m\partial^iv^m\p_i\left(\frac{1+K}{1-Kv^2}\p_jv^j\right)}_{(v)} -\underbrace{t^{-\alpha}\int v_m\partial^iv^m\p_i\left(\frac{1-K}{1-Kv^2}v^j\p_jL\right)}_{(iii)} 
\\
&
+\underbrace{\frac{\alpha(1+K)}{t}\int v_m\partial^iv^m\p_i\left(\frac{1-3K}{1-Kv^2}v^2\right)}_{(ii)} \Big]+f(t)\\
}}
The first term on the right-hand side appears as the first term on the right-hand side in the lemma. The terms marked $(i)$ are perturbative by an integration by parts argument. The terms marked $(ii)$ are directly perturbative. We combine the terms in $(iii)$ to obtain the second term on the right-hand side in the lemma up to perturbative terms. The term $(iv)$ yields the third term on the right-hand side of the lemma up to perturbative terms. Finally the term marked $(v)$ yields the last explicit term in the lemma, again up to perturbative terms. 
\end{proof}

\begin{lem}\label{lem-en-p4}

Let $(v,L)$ be a solution to \eqref{eq-v}-\eqref{eq-L}. Then the following identity holds. 
\eq{\alg{\notag
\p_t\Big(&\frac12\int\frac{1}{1-v^2} v_m\p_i v^m v_n\p^i v^n\Big)  =-\frac{\alpha(1-3K)}{t}\int\frac{v_m\p^iv^m}{1-v^2}v_n\p_iv^n \\
&-\frac{K}{1+K}t^{-\alpha}\int v_m\p^iv^m \p_i\left(v^n\p_nL\right)\frac{1-v^2}{1-Kv^2} +t^{-\alpha}\int\frac{v_m\p^iv^m}{1-v^2}\p_i(\p_jv^j)v^2\left(1-\frac{1-K}{1-Kv^2}\right) +f(t)
}}

\end{lem}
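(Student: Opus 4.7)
The plan is to proceed exactly as in Lemmas \ref{lem-en-p1}--\ref{lem-en-p3}: apply the product rule to the five factors of the integrand (the coefficient $1/(1-v^2)$, the two undifferentiated $v$'s, and the two factors $\partial_i v^m$, $\partial^i v^n$), substitute the evolution equation \eqref{eq-v} in each spot, and then separate the terms which are explicitly quadratic in the highest-order derivatives from the genuinely perturbative tail. The quartic and higher-order powers of $v,\partial v, \partial L$ generated by the substitution will be absorbed into $f(t)$, while the second-order spatial derivatives of $v$ and $L$ must be retained.

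For the factors without spatial derivatives, the substitution $\partial_t v^m=-\tfrac{\alpha(1-3K)}{t}v^m+(\text{lower-order-decaying terms})$ contributes the friction piece. Collecting the friction contribution from the two $v_m,v_n$ factors, the two $\partial v$ factors (after commuting $\partial_t\partial_i=\partial_i\partial_t$ and extracting the $-\tfrac{\alpha(1-3K)}{t}\partial_i v^m$ piece), and the differentiation of $1/(1-v^2)$ (which produces an extra $v^2/(1-v^2)$), and using the elementary identity
\eq{\notag
\tfrac{1}{1-v^2}+\tfrac{v^2}{(1-v^2)^2}=\tfrac{1}{(1-v^2)^2}=\tfrac{1}{1-v^2}+O(v^2)\,,
}
reproduces the first displayed term $-\tfrac{\alpha(1-3K)}{t}\int\tfrac{v_m\partial^i v^m v_n\partial_i v^n}{1-v^2}$ up to a quartic error in $v$ with $t^{-1}$ decay, which is perturbative. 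The remaining nonlinear pieces of \eqref{eq-v} substituted into the $v_m,v_n$ and the $1/(1-v^2)$ slots are pointwise bounded by $\|v\|_{L^\infty}$ or $\|Dv\|_{L^\infty}$ multiplied by the $L^2$ energy density, and therefore feed directly into $f(t)$.

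The two derivative slots $\partial_i v^m$ and $\partial^i v^n$ give, by symmetry, two equal contributions which cancel the $\tfrac12$ in front. In those slots the commuted equation $\partial_i\partial_t v^m=\partial_i[\text{RHS of \eqref{eq-v}}]$ contributes second spatial derivatives whenever the outer $\partial_i$ lands on the factor already carrying a derivative. The pressure term $-\tfrac{K}{1+K}t^{-\alpha}(1-v^2)\partial^m L$ yields $-\tfrac{K}{1+K}t^{-\alpha}\int v_m\partial^i v^m\,\partial_i\partial^m L\,v_n\partial^i v^n$ at leading order, which after rewriting $v_m\partial_i\partial^m L=\partial_i(v^m\partial_m L)-\partial_i v^m\partial_m L$, renaming indices, and multiplying/dividing by $(1-v^2)/(1-Kv^2)$ (which differs from $1$ by $O(v^2)$), gives exactly the second explicit term modulo $f(t)$. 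Likewise the term $t^{-\alpha}(1-\tfrac{1-K}{1-Kv^2})v^m\partial_j v^j$ contributes (via the identity \eqref{id-K}) the third explicit term, after the outer $\partial_i$ is allowed to land on $\partial_j v^j$; the remaining distributions of $\partial_i$ produce cubic-or-higher expressions in $v,\partial v,\partial L$ which are perturbative.

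The remaining pieces of \eqref{eq-v} substituted into the $\partial v$ slots, namely the convective term $-t^{-\alpha}v^j\partial_j v^m$, the $(1-v^2)v^m v^j\partial_j L$ correction, and the higher-order friction correction, are handled as in Lemma \ref{lem-en-p1}: commutators and integration by parts turn them into spatial $L^\infty$ norms times $L^2$ quantities, producing only perturbative contributions. The main obstacle is the careful bookkeeping of the many cross-terms and the use of \eqref{id-K} at the right moment to match the claimed coefficients; once this is organized, every surviving term has been accounted for in the displayed identity.
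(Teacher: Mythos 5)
Your overall strategy matches the paper's (apply the product rule, substitute \eqref{eq-v} into each factor, retain the pieces with two spatial derivatives, absorb the rest into $f(t)$), and the identification of which pieces of \eqref{eq-v} are responsible for the three explicit terms is essentially right in outline. The discussion of friction contributions from the undifferentiated slots and the elementary identity for $1/(1-v^2)^2$ is unnecessary — each such contribution carries an extra power of $v$ and is directly perturbative, as the paper notes with ``the lower order terms yield only perturbative terms.''

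However, there is a genuine gap in your derivation of the \emph{second} explicit term. You claim the pressure term alone produces it ``after multiplying/dividing by $(1-v^2)/(1-Kv^2)$,'' with the difference absorbed into $f(t)$ because $\tfrac{1-v^2}{1-Kv^2}=1+O(v^2)$. But the discrepancy is
\[
-\tfrac{K}{1+K}t^{-\alpha}\int v_m\p^iv^m\,\p_i\big(v^n\p_nL\big)\Big(\tfrac{1-v^2}{1-Kv^2}-1\Big),
\]
whose leading piece $\sim\int v_m\p^iv^m\,v^n\p_i\p_nL\cdot O(v^2)$ contains a \emph{second} spatial derivative of $L$; this is not controlled by the polynomial in $\|DL\|_{L^2},\|Dv\|_{L^2},\|v\|_{L^\infty},\|Dv\|_{L^\infty}$ that defines a perturbative term, and integrating by parts merely transfers the second derivative to $v$. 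In the paper this piece is cancelled, not estimated: the term $\tfrac{1-K}{1+K}\big(1-\tfrac{1-K}{1-Kv^2}\big)v^iv^j\p_jL$ in \eqref{eq-v} — which you dismiss as purely perturbative — contributes an explicit second-derivative term that, via \eqref{id-K}, combines with the pressure contribution to produce precisely the coefficient $(1-v^2)/(1-Kv^2)$. Relatedly, you attribute the use of \eqref{id-K} to the third explicit term, but the paper needs it for this combination; the third explicit term follows from the $\big(1-\tfrac{1-K}{1-Kv^2}\big)v^i\p_jv^j$ piece without invoking the identity. To close the argument you would need to track the $v^iv^j\p_jL$ contribution explicitly and carry out the combination, rather than estimate it away.
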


\begin{proof}
We use \eqref{eq-v} to evaluate the resulting terms when the time-derivative hits the derivative terms. The lower order terms yield only perturbative terms. This implies
\eq{\alg{\notag
\p_t\frac12&\Big(\int\frac{1}{1-v^2} v_m\p_i v^m v_n\p^i v^n \Big)
=-\frac{\alpha(1-3K)}{t}\int\frac{v_m\p^iv^m}{1-v^2}v_n\p_iv^n
\\&
-\underbrace{\frac{K}{1+K}t^{-\alpha}\int\frac{v_m\p^iv^m}{1-v^2}v_n\p_i\left((1-v^2)\p^nL\right)}_{(iv)}
+\underbrace{\frac{1-K}{1+K}t^{-\alpha}\int\frac{v_m\p^iv^m}{1-v^2}v_n\p_i\left((1-\frac{1-K}{1-Kv^2})v^nv^j\p_jL\right)}_{(iv)} 
\\
&
-\underbrace{t^{-\alpha}\int\frac{v_m\p^iv^m}{1-v^2}v_n\p_i(v^j\p_jv^n)}_{(i)}
+\underbrace{t^{-\alpha}\int\frac{v_m\p^iv^m}{1-v^2}v_n\p_i\left((1-\frac{1-K}{1-Kv^2})v^n\p_jv^j\right)}_{(iii)} \\
&
+\underbrace{\frac{\alpha(1-3K)}{t}\int\frac{v_m\p^iv^m}{1-v^2}v_n\p_i\left(\frac{1-K}{1-Kv^2}v^nv^2\right)}_{(ii)} +f(t).
}}

The first term on the right-hand side appears in the lemma. The term $(i)$ is perturbative by an integration by parts argument. The last explicit term $(ii)$ is perturbative. The term $(iii)$ equals the last explicit term in the lemma up to perturbative terms. The terms marked $(iv)$ equal the second term in the lemma up to perturbative terms, where we used the identity \eqref{id-K}.
\end{proof}

We have now evaluated the time-derivatives of all parts of the energy functional defined in \eqref{def-en}. We now combine these in the following proposition, which provides the identity for the $L^2$-energy.
\begin{prop}\label{en-est-L2}
Let $(v,L)$ be a solution to \eqref{eq-v}, \eqref{eq-L}. Then the following identity holds. 
\eq{\alg{\notag
\p_t E_0[v,L]&=-\frac{\alpha(1-3K)}{t}\int \partial_i v^j\partial^iv_j  +f(t)\,,
}}
where $f(t)$ is a perturbative term. 
\end{prop}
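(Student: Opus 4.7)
The proposition is obtained by summing the four identities in Lemmas \ref{lem-en-p1}--\ref{lem-en-p4} (corresponding to the four pieces of $E_0$) and verifying that all non-perturbative contributions, apart from the single term $-\frac{\alpha(1-3K)}{t}\int|\partial v|^2$, cancel after appropriate algebraic manipulations. I would organize the right-hand sides into three groups according to their structural type and treat each group separately.

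\textbf{Friction-type terms.} The coefficient $-\frac{\alpha(1-3K)}{t}$ appears in Lemmas \ref{lem-en-p1}, \ref{lem-en-p3}, \ref{lem-en-p4}. Only the contribution $-\frac{\alpha(1-3K)}{t}\int \partial_i v^j \partial^i v_j$ from Lemma \ref{lem-en-p1} survives; the other two are quadratic or quartic in $v$ and each bounds by $t^{-1}$ times a cubic-or-higher polynomial in $\|v\|_{L^\infty}$, $\|Dv\|_{L^2}$, $\|DL\|_{L^2}$, so they fit the perturbative template \eqref{f-est} and may be absorbed into $f(t)$.

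\textbf{Second-order bulk terms.} Among the $t^{-\alpha}$-coefficient expressions, there are three structural types to reconcile:
(i) $\partial v \cdot \partial^2 L$ terms: these are the first term of Lemma \ref{lem-en-p1} together with the lone term of Lemma \ref{lem-en-p2} and the $v^2 \p^i L \p_i \p_j v^j$ term of Lemma \ref{lem-en-p3}. An integration by parts moving $\partial_j$ from $L$ onto $v^j$ in the Lemma \ref{lem-en-p1} term produces $(1-v^2)\partial^i L \partial_i \partial_j v^j$ up to perturbative remainders, which combines with Lemma \ref{lem-en-p2} via the identity \eqref{id-K} to leave a factor $v^2\frac{1-v^2}{1-Kv^2}$; this precisely cancels the Lemma \ref{lem-en-p3} contribution.
(ii) $v\cdot \partial v \cdot \partial^2 v$ terms: the second term in Lemma \ref{lem-en-p1}, the last explicit term in Lemma \ref{lem-en-p3}, and the last explicit term in Lemma \ref{lem-en-p4}. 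Applying \eqref{id-K} to rewrite the factors $1-\frac{1-K}{1-Kv^2}$, the three coefficients reduce to $K\frac{1-v^2}{1-Kv^2}$, $-\frac{K(1+K)}{(1+K)(1-Kv^2)}$, $K\frac{v^2}{1-Kv^2}$ (times common factors) and sum to zero pointwise.
(iii) $v\cdot v\cdot \partial v\cdot \partial^2 L$ terms: the third term in Lemma \ref{lem-en-p1}, the $v^m\partial_i v_m \partial^i(v^j\partial_j L)$ term in Lemma \ref{lem-en-p3}, and the $\partial_i(v^n\partial_n L)$ term in Lemma \ref{lem-en-p4}. Expanding the products $\partial^i(v^j\partial_jL)$ separates a perturbative quartic piece from the principal piece $v^jv^m\partial_iv_m\partial^i\partial_jL$; using \eqref{id-K} to compute the common factor, the three coefficients are $\frac{K(1-K)}{1+K}$, $\frac{K^2}{1+K}$, $-\frac{K}{1+K}$, whose sum is $0$.

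\textbf{Main obstacle.} The bookkeeping — in particular, identifying which integration-by-parts step converts a Lemma \ref{lem-en-p1} integrand into the canonical form matching Lemmas \ref{lem-en-p2}--\ref{lem-en-p4} without introducing uncontrolled boundary or commutator terms. Each IBP produces lower-order pieces involving $\partial(1-v^2)$ or $\partial(1-Kv^2)^{-1}$ which are cubic in $v, \partial v, \partial L$ and hence perturbative; verifying this systematically and confirming that the identity \eqref{id-K} closes each of the three algebraic subtotals to zero is the only non-routine part of the computation, and is essentially the reason the precise numerical coefficients in \eqref{def-en} were chosen.
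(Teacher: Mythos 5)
Your proposal is correct and follows essentially the same route as the paper's proof: summing Lemmas \ref{lem-en-p1}--\ref{lem-en-p4}, discarding the cubic/quartic friction terms from Lemmas \ref{lem-en-p3} and \ref{lem-en-p4} as perturbative, and grouping the remaining $t^{-\alpha}$ terms into the same three families $(\partial v\cdot\partial^2L,\ v\,\partial v\cdot\partial^2 v,\ v\,v\,\partial v\cdot\partial^2L)$, which cancel after integration by parts via the identity \eqref{id-K}. Your three coefficient sums coincide (up to harmless bookkeeping of the overall factors) with the three cancellation identities displayed in the paper's proof, so no gap remains.
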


\begin{proof}
Combining Lemmas \ref{lem-en-p1} to \ref{lem-en-p4} yields the following identity:
\begin{align*}
\p_t E_0[v,L]&=-\frac{\alpha(1-3K)}{t}\int \partial_i v^j\partial^iv_j  \underbrace{-\frac{K}{1+K}t^{-\alpha}\int \p_i v^j((1-v^2)\p^i\p_j L) }_{(i)}\\
&+\underbrace{t^{-\alpha}\int \p^iv_j\left((1-\frac{1-K}{1-Kv^2})v^j \p_i\p_kv^k\right) }_{(ii)}
+\underbrace{t^{-\alpha}\frac{1-K}{1+K}\int \p^i v_j\left((1-\frac{1-K}{1-Kv^2})v^j v^k\p_i\p_k L\right) }_{(iii)}\\
&\underbrace{-t^{-\alpha}\frac{K}{(1+K)}\int\frac{1-v^2}{1-Kv^2}\p^iL\p_i\p_jv^j }_{(i)}\\
&+\frac{K}{1+K}\Big[-\frac{\alpha(1-3K)}{t}\int v^m\p^iL\p_i v_m  
+\underbrace{t^{-\alpha}\int v^m\p_iv_m\p^i(v^j\p_jL)\left(1-\frac{1-K}{1-Kv^2}\right) }_{(iii)}
\\
&
+\underbrace{t^{-\alpha}\int v^2\p^iL\p_i\p_jv^j\left(1-\frac{1-K}{1-Kv^2}\right) }_{(i)}
+\underbrace{-t^{-\alpha}\int v_m\partial^iv^m\p_i\p_jv^j\left(\frac{1+K}{1-Kv^2}\right) }_{(ii)}\Big]\\
&-\frac{\alpha(1-3K)}{t}\int\frac{v_m\p^iv^m}{1-v^2}v_n\p_iv^n 
+ \underbrace{-\frac{K}{1+K}t^{-\alpha}\int v_m\p^iv^m \p_i\left(v^n\p_nL\right)\frac{1-v^2}{1-Kv^2} }_{(iii)}
\\&
+\underbrace{t^{-\alpha}\int\frac{v_m\p^iv^m}{1-v^2}\p_i(\p_jv^j)v^2\left(1-\frac{1-K}{1-Kv^2}\right) }_{(ii)}+f(t)\,.
\end{align*}
The first term on the right-hand side appears in the lemma explicitly. The terms marked by $(i)$, after an integration by parts combine to zero (up to perturbative terms) using the following identity:
\eq{\alg{\notag
\frac{K}{1+K}\Big((1-v^2)&-\frac{1-v^2}{1-Kv^2}+v^2\Big(1-\frac{1-K}{1-Kv^2}\Big)\Big)
%=\frac{K}{1+K}\Big((1-v^2)-\frac{1-v^2}{1-Kv^2}+v^2K\frac{1-v^2}{1-Kv^2}\Big)\\
%&=\frac{K}{1+K}\frac{1}{1-Kv^2}\Big((1-v^2)(1-Kv^2)-(1-v^2)+v^2K(1-v^2)\Big)\\
%&=\frac{K}{1+K}\frac{1}{1-Kv^2}\Big(1-v^2-Kv^2+Kv^4-1+v^2+v^2K-Kv^4)\Big)
=0\,.
}}
The terms marked by $(ii)$ combine to zero (up to perturbative terms), again after an integration by parts and using  the following identity:
\eq{\alg{\notag
&1-\frac{1-K}{1-Kv^2}-\frac{K}{1+K}\frac{1+K}{1-Kv^2}+\frac{v^2}{1-v^2}\Big(1-\frac{1-K}{1-Kv^2}\Big)
%&=\frac{1}{1-Kv^2}\Big[(1-Kv^2)-(1-K)-\frac{K}{1+K}(1+K)+\frac{v^2}{1-v^2}((1-Kv^2)-1+K)\Big]\\
%&=\frac{1}{1-Kv^2}\Big[-Kv^2+\frac{Kv^2}{1-v^2}(-v^2+1)\Big]
=0\,.
}}
Finally, we collect the terms marked by $(iii)$, which combine to zero up to perturbative terms using the following identity:
\eq{\alg{\notag
&\frac{1-K}{1+K}\Big(1-\frac{1-K}{1-Kv^2}\Big)+\frac{K}{1+K}\Big(1-\frac{1-K}{1-Kv^2}\Big)-\frac{K}{1+K}\frac{1-v^2}{1-Kv^2}
%\frac{1}{1+K}\Big[(1-K)(1-\frac{1-K}{1-Kv^2})+K(1-\frac{1-K}{1-Kv^2})-K\frac{1-v^2}{1-Kv^2}\Big]\\
%&=\frac{1}{1+K}\Big[(1-\frac{1-K}{1-Kv^2})-K\frac{1-v^2}{1-Kv^2}\Big]\\
%&=\frac{1}{1+K}\frac{1}{(1-Kv^2)}\Big[((1-Kv^2)-(1-K))-K(1-v^2)\Big]\\
=0\,.
}} All other unmarked terms are perturbative.
\end{proof}

%%%
%%%
%%%

\subsection{Corrected $L^2$-energy} Proposition \ref{en-est-L2} does not include a decay-inducing term on the right-hand side. Therefore, we next construct a correction term which we add to the $L^2$-energy in order to obtain a decay estimate. Despite the decay-inducing term we need to check that the resulting corrected energy remains coercive and hence equivalent to the standard first Sobolev norm of the fluid variables.
We define the correction term by:
\eq{\label{correction}
t^{\alpha-1}\int v^i\partial_i L  .
}
The following lemma provides its energy identity.
\begin{lem}
Let $(v,L)$ be a solution to \eqref{eq-v-and-L}. Then, the following identity holds:
\eq{\alg{\notag
\p_t\Big(t^{\alpha-1}\int v^i\partial_i L  \Big)&=-t^{\alpha-2}(1-3K\alpha)\int v^i\p_i L \\
&\qquad-\frac{1}{t}\frac{K}{1+K}\int \p^i L\p_iL  +\frac{(1+K)}{t}\int (\p_j v^j)^2 +f(t).
}}
\end{lem}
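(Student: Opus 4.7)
The plan is to differentiate the product $t^{\alpha-1}\int v^i\partial_i L$ and then substitute the evolution equations \eqref{eq-v} and \eqref{eq-L}, retaining only the genuinely linear pieces and absorbing the rest into a perturbative error $f(t)$. The product rule gives three contributions,
\begin{equation*}
\partial_t\!\Big(t^{\alpha-1}\!\int v^i\partial_i L \Big) = (\alpha-1)t^{\alpha-2}\!\int v^i\partial_i L \;+\; t^{\alpha-1}\!\int (\partial_t v^i)\partial_i L \;+\; t^{\alpha-1}\!\int v^i \partial_i\partial_t L,
\end{equation*}
after commuting $\partial_t$ with the spatial derivative in the last integral.

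For the second contribution I use \eqref{eq-v}. The term $-\alpha(1-3K)v^i/t$ produces $-\alpha(1-3K)t^{\alpha-2}\int v^i\partial_i L$, while the term $-\frac{K}{1+K}\frac{t^{1-\alpha}}{t}(1-v^2)\partial^i L$ produces $-\frac{K}{1+K}\frac{1}{t}\int\partial^i L\partial_i L$ once the $v^2$-piece is discarded as perturbative. Every remaining term in \eqref{eq-v} carries either an extra $v^j\partial_j v^i$ (perturbative after integration by parts) or a prefactor $1-\frac{1-K}{1-Kv^2}=K\frac{1-v^2}{1-Kv^2}$ by \eqref{id-K}, hence at least one extra $v$-factor, so all such terms satisfy \eqref{f-est}. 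Combining the first contribution with the $\int v^i\partial_i L$ piece just computed gives the coefficient $(\alpha-1)-\alpha(1-3K)= -(1-3K\alpha)$, matching the first explicit term in the statement.

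For the third contribution I substitute \eqref{eq-L}. Expanding $(1-Kv^2)^{-1}=1+O(v^2)$, the only leading term is $-t^{-\alpha}(1+K)\partial_j v^j$; the other two terms in \eqref{eq-L} are either quadratic in $v$ or of the form $v^j\partial_j L$ and yield perturbative contributions after routine $L^\infty$–$L^2$ splitting. Hence
\begin{equation*}
t^{\alpha-1}\!\int v^i\partial_i\partial_t L \;=\; -t^{-\alpha+\alpha-1}(1+K)\!\int v^i\partial_i\partial_j v^j + f(t) \;=\; \frac{1+K}{t}\!\int(\partial_j v^j)^2 + f(t),
\end{equation*}
using one integration by parts in $\partial_i$.

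Summing the three contributions produces exactly the identity claimed. The only delicate point, and the main place where care is needed, is the bookkeeping: one must verify at each step that every omitted term has at least cubic error structure in the sense of \eqref{f-est}, which uses the expansions of $(1-Kv^2)^{-1}$ and $(1-v^2)$ together with the bootstrap $|v|<1/10$ to push a factor of $v$ into $L^\infty$ while pairing the remaining gradients in $L^2$. This is routine but delicate because several intermediate terms look linear before the identity \eqref{id-K} or an integration by parts reveals the hidden velocity factor.
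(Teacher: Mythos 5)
Your computation is correct and it is exactly the approach the paper takes (the paper's proof is a single sentence, essentially asking the reader to substitute \eqref{eq-v} and \eqref{eq-L} and discard the manifestly cubic pieces).

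One small misstatement in your justification is worth flagging, though it does not break the argument. You write that terms in \eqref{eq-v} carrying the prefactor $1-\tfrac{1-K}{1-Kv^2}=K\tfrac{1-v^2}{1-Kv^2}$ ``hence'' have at least one extra $v$-factor. That prefactor is $O(1)$ near $v=0$ (it tends to $K$, not to $0$), so identity \eqref{id-K} does not by itself supply a small factor. Those terms are perturbative for a different reason: each of them already contains an explicit $v^i$ or $v^2$ multiplying the prefactor (e.g.\ $(1-\tfrac{1-K}{1-Kv^2})v^i\partial_j v^j$ and $(1-\tfrac{1-K}{1-Kv^2})v^iv^j\partial_j L$), so after pairing with $\partial_i L$ they are at least cubic. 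Likewise in the $\partial_t L$ substitution, the piece where $\partial_i$ lands on $(1-Kv^2)^{-1}$ brings down $v\cdot\partial_i v$ and is therefore quartic when integrated against $v^i$. With that correction the bookkeeping closes and your three contributions sum to the stated identity, with the coefficient $(\alpha-1)-\alpha(1-3K)=-(1-3K\alpha)$ and the integration by parts $\int v^i\partial_i\partial_j v^j = -\int(\partial_j v^j)^2$ both exactly as you wrote.
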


\begin{proof}
We use the equations \eqref{eq-v} and \eqref{eq-L} and observe that most terms that appear are perturbative due to the low-regularity of the correction term. We keep explicitly only the terms proportional to the correction terms and the quadratic terms in the energy.
\end{proof}

%

%\subsubsection{Corrected $L^2$-energy}
\begin{defn}[Corrected $L^2$-energy]
For $c\in\mathbb R$, we define
\eq{\label{corr-L2-en}
E_{0,c}[v,L]:=E_0[v,L]+c t^{\alpha-1}\int v^i\p_iL  .
}
\end{defn}

\begin{lem}\label{lem-cor-zero}
Let $(v,L)$ be a solution to \eqref{eq-v-and-L}. With $c=1/2(1+K)^{-1}\alpha(1-3K)$ the following identity holds:
\eq{\alg{\notag
\p_tE_{0,c}[v,L]&=-\frac{\alpha(1-3K)}{t}\Big[\frac{1}{2}\int \partial_i v^j\partial^iv_j  +\frac{1}{2}\frac{K}{(1+K)^2}\int \p^i L\p_iL  \Big]\\
&\qquad-\frac12 \frac{\alpha(1-3K)}{t}\int (\mathrm{rot}v)^2  -ct^{\alpha-2}(1-3K\alpha)\int v^i\p_i L +f(t)\,,
}}
where $f(t)$ denotes a perturbative term.
\end{lem}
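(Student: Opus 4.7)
The plan is to combine Proposition~\ref{en-est-L2} with the preceding lemma on the evolution of the correction term $t^{\alpha-1}\int v^i\p_i L$, and then to rearrange the quadratic velocity-gradient terms using a standard integration-by-parts identity on the torus. Differentiating \eqref{corr-L2-en} and substituting the two identities produces
\begin{align*}
\p_t E_{0,c}[v,L] &= -\frac{\alpha(1-3K)}{t}\int \p_i v^j \p^i v_j \; - \; \frac{cK}{(1+K)t}\int \p^i L \p_i L \\
&\quad + \frac{c(1+K)}{t}\int (\p_j v^j)^2 \; - \; c t^{\alpha-2}(1-3K\alpha)\int v^i \p_i L \; + \; f(t).
\end{align*}

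The coefficient $c$ is then pinned down by matching the coefficient of $\int \p^i L \p_i L$ to the target $-\frac{\alpha(1-3K)K}{2(1+K)^2 t}$, which forces $c=\frac{\alpha(1-3K)}{2(1+K)}$; this simultaneously reduces the divergence contribution to $\frac{\alpha(1-3K)}{2t}\int (\p_j v^j)^2$. The $\int v^i\p_i L$ term already appears with exactly the coefficient required by the statement, and the $\int \p^i L\p_i L$ contribution fits the bracket on the right-hand side.

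The remaining step is to reorganise the velocity-gradient terms. I would use the pointwise algebraic identity $|\mathrm{rot}\,v|^2 = |\nabla v|^2 - \p_i v^j \p_j v^i$ together with the torus identity $\int \p_i v^j \p_j v^i = \int (\p_i v^i)^2$, obtained by two integrations by parts, to conclude $\int \p_i v^j \p^i v_j = \int (\p_i v^i)^2 + \int (\mathrm{rot}v)^2$. Substituting this into the first displayed term and combining with the $\frac{\alpha(1-3K)}{2t}\int (\p_j v^j)^2$ contribution yields $-\frac{\alpha(1-3K)}{t}\int (\mathrm{rot}v)^2 - \frac{\alpha(1-3K)}{2t}\int (\p_i v^i)^2$, which I would then regroup as $-\tfrac{1}{2}\frac{\alpha(1-3K)}{t}\int \p_i v^j \p^i v_j - \tfrac{1}{2}\frac{\alpha(1-3K)}{t}\int (\mathrm{rot}v)^2$ using the same torus identity in reverse, matching the displayed form.

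There is no real obstacle; the whole computation is algebraic bookkeeping that reduces to verifying two scalar coefficient identities. The conceptual content to record is that $c$ is not a free parameter but is determined by a trade-off between the correction's negative $\int|\p L|^2$ contribution and its positive $\int(\mathrm{div}\,v)^2$ counterpart; this balancing mechanism is precisely the one foreshadowed in the introduction, and the resulting coefficient $\frac{\alpha(1-3K)}{2(1+K)}$ is what will eventually yield the stability threshold \eqref{stab-cond} in the full argument.
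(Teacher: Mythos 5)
Your proposal is correct and follows essentially the same route as the paper: combine Proposition~\ref{en-est-L2} with the lemma for the time derivative of the correction term, use $\int(\mathrm{rot}\,v)^2=\int|Dv|^2-\int(\mathrm{div}\,v)^2$ to trade divergence and full-gradient terms, and fix $c=\tfrac{\alpha(1-3K)}{2(1+K)}$ by balancing the $\int|\p L|^2$ and $\int(\p_jv^j)^2$ contributions. Your coefficient bookkeeping, including the $-c\,t^{\alpha-2}(1-3K\alpha)\int v^i\p_iL$ term, matches the lemma as stated.
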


For the proof we need the following standard lemma.

\begin{lem}
Let $v$ be a $C^1$-vector field and define $(\mathrm{rot} v)^2:=[\mathrm{rot} v]^j[\mathrm{rot} v]_j$. Then
\eq{
\int (\mathrm{rot} v)^2  = \int |D v|^2 -\int (\mathrm{div} v)^2 \,.
}
\end{lem}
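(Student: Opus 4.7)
The plan is to express $(\mathrm{rot}\,v)^2$ in index notation and then use integration by parts on $\mathbb{T}^3$ (which has no boundary) to convert a mixed-derivative term into $(\mathrm{div}\, v)^2$.

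First I would write out the curl using the Levi-Civita symbol, $[\mathrm{rot}\,v]^i = \epsilon^{ijk}\partial_j v_k$, and contract to obtain
\[
(\mathrm{rot}\,v)^2 = \epsilon^{ijk}\epsilon_i{}^{lm}\partial_j v_k\,\partial_l v_m = (\delta^j_l\delta^k_m - \delta^j_m\delta^k_l)\,\partial_j v_k\,\partial^l v^m = |Dv|^2 - \partial_j v_k\,\partial^k v^j,
\]
where I used the standard contraction identity $\epsilon^{ijk}\epsilon_{ilm} = \delta^j_l\delta^k_m - \delta^j_m\delta^k_l$. Integrating over $\mathbb{T}^3$ gives
\[
\int (\mathrm{rot}\,v)^2 = \int |Dv|^2 - \int \partial_j v_k\,\partial^k v^j.
\]

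Next I would handle the cross term $\int \partial_j v_k\,\partial^k v^j$ by two integrations by parts. Since $\mathbb{T}^3$ is closed there are no boundary terms, so
\[
\int \partial_j v_k\,\partial^k v^j = -\int v_k\,\partial_j\partial^k v^j = -\int v_k\,\partial^k\partial_j v^j = \int \partial^k v_k\,\partial_j v^j = \int (\mathrm{div}\,v)^2,
\]
where in the middle step I used the commutativity of partial derivatives (valid since $v$ is $C^1$ and the right-hand side makes sense after the IBP; for a genuinely $C^1$ field one would instead argue by density/mollification, but the paper uses this identity only for smooth enough $v$ as in the bootstrap class). Substituting this back yields the claimed identity.

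There is essentially no obstacle here — the content is the standard vector-calculus identity $|\nabla v|^2 = (\mathrm{div}\,v)^2 + (\mathrm{rot}\,v)^2$ modulo a divergence, integrated over the torus. The only care needed is the regularity issue in the commutation of derivatives; this is handled either by appealing to density of smooth fields in $C^1(\mathbb{T}^3)$ for the identity, or simply by noting that in the application in Lemma \ref{lem-cor-zero} the field $v$ is far smoother than $C^1$.
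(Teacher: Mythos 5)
Your proof is correct: the Levi-Civita contraction identity followed by two integrations by parts on the boundaryless torus is exactly the standard argument, and your remark about handling $C^1$ regularity by density (or by noting that in the application $v$ is much smoother) appropriately closes the only delicate point. The paper states this lemma without proof, labelling it standard, so your argument supplies precisely the omitted reasoning.
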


\begin{proof}[Proof of Lemma \ref{lem-cor-zero}]
From the previous lemmas we obtain the following identity for arbitrary $c\in\mathbb R$.
\eq{\alg{\notag
\p_tE_{0,c}[v,L]&=-\Big[\frac{2\alpha(1-3K)-2c(1+K)}{t}\Big]\frac{1}{2}\int \partial_i v^j\partial^iv_j  
-2c(1+K)\frac1t\frac{1}{2}\frac{K}{(1+K)^2}\int \p^i L\p_iL  \\
&\qquad-\frac12 \frac{\alpha(1-3K)}{t}\int (\mathrm{rot}v)^2  +ct^{\alpha-2}(\alpha-1)\int v^i\p_i L +f(t)\,.
}}
We obtain the desired result with the choice
\eq{\notag
c=\frac12\frac{1}{1+K}\alpha(1-3K)\,.
}
\end{proof}

%%%
%%%
%%%

\subsection{Higher order energies}
In this section, we now consider the evolution equations after commutation with a suitable number of spatial derivatives.
We derive higher order energy estimates based on the structure for the $L^2$-energy \eqref{corr-L2-en}. This set up is based on the commuted equations \eqref{eq-v-higher order} and \eqref{eq-L-higher order}, which takes a similar form as the original equations with lower order terms, which can be treated uniformly.

\subsubsection{Commuted equations}
\begin{lem}\label{comm-eq}
Let $t>1$ and $(v,L)$ be a solution to \eqref{eq-v-and-L}. Furthermore, let $I\in \mathbb N^{3}$ denote a multi-index $I=(\ell_1,\ell_2,\ell_{3})$ and $D^I=\p^{\ell_1}_x\p^{\ell_2}_y\p^{\ell_3}_z$. 
Then, the following identity holds:
\eq{\label{eq-v-higher order}\alg{
\p_t (D^{I} v^i)&=-\frac{\alpha(1-3K)}{t}D^{I} v^i - \frac{K}{1+K}\frac{t^{1-\alpha}}{t}(1-v^2)\p^i D^{I} L\\
&\quad -\frac{t^{1-\alpha}}tv^j\p_j D^{I} v^i+\frac{t^{1-\alpha}}{t}\left(1-\frac{1-K}{1-Kv^2}\right)v^i\p_j D^{I} v^j\\
&\quad+\frac{t^{1-\alpha}}{t}\frac{1-K}{1+K}\left(1-\frac{1-K}{1-Kv^2}\right)v^i v^j\p_j D^{I} L\\
&\quad+\frac{\alpha(1-3K)}{t}\frac{1-K}{1-Kv^2}v^2 D^{I} v^i
+t^{-\alpha}R_v(v,D^{\leq I}v,D^{\leq I}L)\,,
}}
\eq{\label{eq-L-higher order}\alg{
\p_t(D^{I} L)&=-\frac{t^{1-\alpha}}{t}\frac{(1+K)}{1-Kv^2}\partial_j D^{I} v^j-\frac{t^{1-\alpha}}{t}\frac{(1-K)}{1-Kv^2} v^j\p_j D^{I} L\\
&\quad\,\,+\frac{\alpha(1+K)}{t}\frac{1-3K}{1-Kv^2} D^{I} v^2
+t^{-\alpha}R_L(v,D^{\leq I}v,D^{\leq I}L)\,,
}}
where
$R_v(.,.,.)$ and $R_L(.,.,.)$ are polynomials with at least second order terms, where the coefficients may contain the factor $1-\frac{1-K}{1-Kv^2}$. The maximum total number of derivatives appearing in each term is bounded by $|I|+1$.
\end{lem}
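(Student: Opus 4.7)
The plan is to apply $D^I$ termwise to both \eqref{eq-v} and \eqref{eq-L} and use the multi-index Leibniz rule to split each resulting contribution into a \emph{principal part}, in which all $|I|$ commuted derivatives land on the distinguished top-order factor ($v^i$, $v^j$, $v^2$, or $L$), and a \emph{remainder}, in which at least one derivative peels off onto a nonlinear coefficient. The principal parts, collected across all terms, are to reproduce exactly the right-hand sides of \eqref{eq-v-higher order} and \eqref{eq-L-higher order} preceding $R_v$ and $R_L$, while every remaining contribution is absorbed into $R_v$ or $R_L$.

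First I would dispatch the zeroth-order term $-\frac{\alpha(1-3K)}{t}v^i$ in \eqref{eq-v}, which commutes trivially with $D^I$. Each of the other terms has the schematic shape $\Phi(v)\cdot Y$, where $Y$ is either $\partial W$ or $W$ itself for $W\in\{v^j,v^2,L\}$ and $\Phi$ is a smooth rational function of $v$ (well-defined on the bootstrap region $|v|<1/10$). The Leibniz identity
\begin{equation*}
D^I\bigl(\Phi(v)\,Y\bigr) \;=\; \Phi(v)\,D^I Y \;+\; \sum_{0<J\le I} c_{I,J}\,D^J\Phi(v)\,D^{I-J}Y
\end{equation*}
produces the principal summand $\Phi(v)\,D^I Y$ that I keep explicitly, while the rest goes into the remainder, multiplied by the prefactors $t^{-\alpha}/t$ or $t^{-1}$ and absorbed into the overall $t^{-\alpha}$ scaling declared for $R_v,R_L$. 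For the $L$-equation term with $Y=v^2$ this convention retains $\frac{1-3K}{1-Kv^2}\,D^I(v^2)$ as principal, matching the statement verbatim.

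Finally I would verify the three structural properties of $R_v,R_L$. The derivative-count bound $|I|+1$ per monomial is immediate from the expansion above, since a remainder term carries $|J|\ge 1$ derivatives distributed on factors of $v$ (through $D^J\Phi(v)$, via the chain rule applied to $\Phi$) together with $|I-J|$ further derivatives on $W$ and at most one additional $\partial$ from $Y$. The quadratic structure follows because every remainder monomial contains both a factor $D^J\Phi(v)$, itself at least first order in the unknowns, and a factor $D^{I-J}Y$, which is also at least first order; their product is therefore of order $\ge 2$. The admissibility of the factor $1-\frac{1-K}{1-Kv^2}$ in the remainder coefficients is simply inherited from the coefficients already present in \eqref{eq-v}. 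The only real obstacle is combinatorial bookkeeping: one must verify that after collecting the principal contributions from the six terms of \eqref{eq-v} (resp.\ the three of \eqref{eq-L}) the coefficients reassemble exactly as in \eqref{eq-v-higher order} (resp.\ \eqref{eq-L-higher order}), with no top-order piece accidentally left over in $R_v$ or $R_L$.
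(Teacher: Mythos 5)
Your proposal is correct and follows essentially the same route as the paper, whose proof is simply to commute \eqref{eq-v-and-L} with $D^I$ and apply the Leibniz rule, keeping the top-order terms where all derivatives fall on $v^i$, $v^j$, $v^2$ or $L$ and absorbing everything else into $R_v$, $R_L$. Your additional bookkeeping (the derivative count $|I|+1$, the at-least-quadratic structure of the remainders via the chain rule on the rational coefficients, and the absorption of the $t^{-1}$ prefactors into the $t^{-\alpha}$ scaling using $t>1$) just makes explicit what the paper leaves implicit.
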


\begin{proof}
The lemma follows from commuting equations \eqref{eq-v-and-L} with $D^{I}$ and applying the Leibniz rule.
\end{proof}
%%%
\subsubsection{Higher-order energies}
We now define the higher-order energies with suitable correction term and derive their energy estimate. 

\begin{defn}[Higher-order energy]
Let $\ell\in\mathbb N$, we define
\eq{\alg{
E_\ell[v,L]&:=\frac12\int |D^{\ell+1}v|^2 +\frac{K}{1+K}\int v_{m}(D^{\ell+1}v^{m},D^{\ell+1}L) \\
& \qquad+\frac12\int\frac{1}{1-v^2} v_m v_{n}(D^{\ell+1}v^{m},D^{\ell+1}v^{n}) +\frac12\frac K{(1+K)^2}\int(1-v^2) |D^{\ell+1}L|^{2} \,.
}}
\end{defn}

%%%
\subsubsection{Higher-order energy identities}

We define the corrected higher-order energy analogously to what we did at lowest $L^2$-order. 
\begin{defn}[Higher-order corrected energy]
The higher-order correction term is given by
\eq{
C_\ell[v,L]=t^{\alpha-1}\int (D^{\ell}v,D^{\ell+1}L)\,.
} 
The higher-order corrected energy is defined by
\eq{\label{corr-energy}
E_{\ell,c}[v,L]:= E_\ell[v,L]+\frac12\frac{1}{1+K}\alpha(1-3K)C_\ell[v,L] \,.
}
\end{defn}

\begin{defn}[Higher-order perturbative term]
We call $R(t)$ a \emph{higher-order perturbative term}, if it can be bounded by
\eq{\alg{
|R(t)|\leq P(\|Dv\|_{H^N},\|DL\|_{H^N},\overline v),
}}
where $P(\cdot)$ denotes a polynomial where each term is at least of order three and $N\geq \max (6,\ell) $.
\end{defn}

\begin{lem}\label{lem-corr-ell}

Let $(v,L)$ be a solution to \eqref{eq-v-and-L}. For $c=\frac{\alpha(1-3K)}{2(1+K)}$ the following identity holds:
\eq{\alg{\notag
\p_t E_{\ell,c}[v,L]&=-\frac{\alpha(1-3K)}{t}\Big[\frac12\int |D^{\ell+1}v|^{2} +\frac12\frac K{(1+K)^2}\int(1-v^2) |D^{\ell+1}L|^{2}  \Big]\\
&\qquad- c(1-3K\alpha)t^{\alpha-2}\int(D^{\ell}v,D^{\ell+1}L)  -\frac12\frac{\alpha(1-3K)}{t}\int |D^{\ell}\mathrm{rot} v|^{2}+R(t)\,,
}}
where $R(t)$ is a higher-order perturbative term.
\end{lem}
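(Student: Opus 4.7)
The plan is to mirror the proof of Lemma \ref{lem-cor-zero} at order $\ell$, using the commuted evolution equations \eqref{eq-v-higher order}, \eqref{eq-L-higher order} of Lemma \ref{comm-eq} in place of \eqref{eq-v}, \eqref{eq-L}. The essential observation is that the commuted system has exactly the same principal part as the uncommuted one, with the remainders $R_v, R_L$ being polynomials of degree $\ge 2$ in $v, D^{\le I}v, D^{\le I}L$ carrying at most $\ell+1$ derivatives across all factors. Such remainders, together with every contribution in which a derivative falls on one of the coefficients $v$, $1-v^2$ or $(1-K)/(1-Kv^2)$ rather than on a top-order factor, will be absorbed into the higher-order perturbative term $R(t)$.

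Concretely, I would first compute $\p_t E_\ell[v,L]$ piece by piece, paralleling Lemmas \ref{lem-en-p1}--\ref{lem-en-p4}. For each of the four integrals comprising $E_\ell$, the principal terms produced have exactly the same algebraic structure as at the $L^2$ level, with $\p_i v$ and $\p_i L$ replaced by $D^{\ell+1}v$ and $D^{\ell+1}L$. The three algebraic identities that cancel the groups labelled $(i)$, $(ii)$, $(iii)$ in the proof of Proposition \ref{en-est-L2} therefore apply verbatim, yielding $\p_t E_\ell = -\frac{\alpha(1-3K)}{t}\int |D^{\ell+1}v|^2 + R(t)$. Next, for $\p_t C_\ell[v,L]$, the time derivative of $t^{\alpha-1}$ gives $(\alpha-1)t^{\alpha-2}\int(D^\ell v, D^{\ell+1}L)$; the friction term of \eqref{eq-v-higher order} contributes $-\alpha(1-3K)t^{\alpha-2}\int(D^\ell v, D^{\ell+1}L)$; the $(1-v^2)\p^i D^I L$ term of \eqref{eq-v-higher order} contributes $-\frac{K}{(1+K)t}\int(1-v^2)|D^{\ell+1}L|^2$; and the $-(1+K)/(1-Kv^2)\p_j D^I v^j$ term of \eqref{eq-L-higher order}, after one integration by parts, contributes $+\frac{1+K}{t}\int |D^\ell\mathrm{div}\,v|^2$, all modulo perturbative error. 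Summing $c\,\p_t C_\ell$ into $\p_t E_\ell$, using the identity $|D^{\ell+1}v|^2 = |D^\ell\mathrm{div}\,v|^2 + |D^\ell\mathrm{rot}\,v|^2$ (the componentwise form of the stated rot/div lemma applied to $D^\ell v$), and taking $c = \alpha(1-3K)/(2(1+K))$ cancels exactly half of the divergence contribution against the $|D^{\ell+1}v|^2$ decay term; collecting the surviving terms reproduces the stated identity.

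The main obstacle will be closing the estimate in regularity, that is, verifying that every contribution collected into $R(t)$ fits the higher-order perturbative template. This requires Moser-type product estimates on $R_v, R_L$ and on all coefficient-derivative contributions: each such term is at least cubic in $v, Dv, DL$ with total derivative order bounded by $\ell+1 \le N+1$, so the lowest-derivative factor can be placed in $L^\infty$ via Sobolev embedding on $\mathbb{T}^3$ once $N \ge 6$, while the high-derivative factor is controlled in $L^2$ by the energy. The bootstrap $|v| < 1/10$ ensures that the rational coefficients $1/(1-v^2)$ and $1/(1-Kv^2)$ remain smooth, so their spatial derivatives factor polynomially through $v$ and its derivatives without any additional regularity cost. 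Together these ingredients guarantee that every term beyond the principal ones satisfies the degree-$\ge 3$ polynomial bound demanded by the higher-order perturbative definition, closing the estimate at level $\ell$.
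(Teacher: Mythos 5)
Your proposal is correct and follows essentially the same route as the paper: the commuted equations of Lemma \ref{comm-eq} have the same principal part as the uncommuted system, so the cancellations of Proposition \ref{en-est-L2} and the correction-term computation of Lemma \ref{lem-cor-zero} carry over verbatim at order $\ell$, with the remainders $R_v,R_L$ handled by derivative counting, Cauchy--Schwarz and Sobolev embedding for $N\geq 6$, exactly as the paper does. The only detail you leave implicit is the splitting $v=(v-\overline v)+\overline v$ combined with the Poincar\'e inequality, which the paper uses so that undifferentiated factors of $v$ are bounded through $\overline v$ and $\|Dv\|_{H^N}$, as the definition of a higher-order perturbative term literally requires.
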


\begin{proof}
Using Lemma \ref{comm-eq} to replace the time derivatives of the derivatives of $v$ and $L$, the identity follows in direct analogy to the proof of Lemma \ref{lem-cor-zero} and the lemmas leading up to it, since all non-perturbative terms have the same structure as in the aforementioned case. The perturbative terms arise from the terms $R_v$ and $R_L$ of Lemma \ref{comm-eq}. The perturbative terms involve expressions which contain at least two factors in either $v$ or derivatives of $v$ and $L$. The critical type of term that one needs to estimate involves precisely two such factors. In that case, we give as an example the method to estimate a term of the following form:
\eq{\label{example-higher-est}
\int |D^{\ell+1}v||D^{I}v||D^JL| ,
} 
where $I$, $J$ are multiindeces. Note that any coefficients, for example of the form $1-\tfrac{1-K}{1-Kv^2}$, can be estimated by $K$ from above.
Due to the condition that $|I|+|J|\leq \ell+2$,
 at least one of the terms (say $|J|$) fulfils the bound
\eq{\notag
|J|\leq\frac12(\ell+2)\,.
}  
This implies 
\eq{\notag
\|D^JL\|_{L^\infty}\leq C \| D L\|_{H^N}\,,
}
by Sobolov-embedding since $|J|+2\leq\frac12(\ell+2)+2\leq \frac 12 N+3\leq N$. 
Consequently, the integral \eqref{example-higher-est} can be estimated by the above and Cauchy-Schwarz to find
\eq{\notag
C\| D L\|_{H^N} \|D v\|^2_{H^N}\,.
} 
For the remaining higher order terms the derivatives are distributed over more than two terms and all but two terms can be estimated pointwise. 
Factors in $v$ are expanded via $v=(v-\overline v) +\overline v$. For example:
\eq{\notag
\int v|D^{\ell+1}v||D^{I}v||D^JL|=\int (v-\overline v)|D^{\ell+1}v||D^{I}v||D^JL|+\overline v\int |D^{\ell+1}v||D^{I}v||D^JL|.
} 
The second term is contained by the above discussion in the set of higher-order perturbative terms. In the first term $v-\overline v  $ is estimated pointwise and the resulting factor is estimated using the Poincar\'e inequality.
\end{proof}

%%%
%%%
%%%

\subsection{Equivalence of energies and total energy}

Before addressing the stability theorem it remains to show that the corrected energies in fact control the perturbation from the homogeneous background solution in standard Sobolev regularity. In order to obtain a uniform energy estimate we also define a total energy, which controls the solution at all orders as well as a corrected version thereof.

\begin{defn}[Total corrected energy]
We define the total $L^2$-based energy of order $N$ by
\eq{\alg{
E_N(t)&= \sum_{0\leq \ell\leq N} \frac12\int |D^{\ell+1}v|^{2} +\frac12\frac K{(1+K)^2}\int(1-v^2) |D^{\ell+1}L|^{2}\,.
}}
We define the total energy by
\eq{
\mathbf{E}_{N}(t):= \overline v(t)^2+ E_N(t)\,,
}
and we define the total corrected energy by
\eq{\label{total-energy}
\mathbf{E}_{N,c}(t):= \overline v(t)^2+ \sum_{0\leq\ell\leq N}E_{\ell,c}[v,L]\,.
}
\end{defn}

The following lemma shows that the corrected energy controls the standard energy.
\begin{lem}
For $t$ sufficiently large and $\|v\|_{L^{\infty}}$ sufficiently small there exists a positive constant $C$ such that
\eq{\alg{\notag
\frac1C\mathbf E_{N,c}(t)&\leq \mathbf E_{N}(t)\leq C \mathbf E_{N,c}(t) \,,\\
|\mathbf E_{N,c}(t)-\mathbf E_{N}(t)|&\lesssim t^{\alpha-1} \mathbf E_{N,c}(t)+\mathbf E_{N,c}(t)^{3/2}\,.
}}
\end{lem}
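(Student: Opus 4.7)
The plan is to write the difference $\mathbf{E}_{N,c}(t)-\mathbf{E}_N(t)$ explicitly as a sum over the regularity index $\ell$, using that the $\overline{v}(t)^2$ contributions cancel between the two energies. What remains, for each $\ell$, are the two ``$v$-weighted'' top-order cross terms from $E_\ell[v,L]$ which do not appear in the definition of $E_N(t)$, together with the correction $\tfrac{\alpha(1-3K)}{2(1+K)}C_\ell[v,L]$. All three families are perturbations of the main energy in distinct ways and should be estimated separately.

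For the $v$-weighted terms, of the schematic shape $\int v_m(D^{\ell+1}v^m,D^{\ell+1}L)$ and $\int \tfrac{v_m v_n}{1-v^2}(D^{\ell+1}v^m,D^{\ell+1}v^n)$, I would apply Cauchy--Schwarz in $L^2$ after pulling out $\|v\|_{L^\infty}$ (respectively $\|v\|_{L^\infty}^2$); the bootstrap assumption $|v|<1/10$ keeps $(1-v^2)^{-1}$ uniformly bounded. To convert the $\|v\|_{L^\infty}$ factor into an energy factor and obtain the $\mathbf{E}_{N,c}^{3/2}$ scaling stated in the lemma, I would use the Sobolev embedding $H^2(\mathbb{T}^3)\hookrightarrow L^\infty$ on $v-\overline{v}$, together with Poincar\'e, to write $\|v\|_{L^\infty}\le|\overline{v}|+C\|Dv\|_{H^2}\le C\mathbf{E}_N^{1/2}$, which is comfortably available since $N\ge 6$. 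For the correction $C_\ell$, Cauchy--Schwarz yields $|C_\ell|\le t^{\alpha-1}\|D^\ell v\|_{L^2}\|D^{\ell+1}L\|_{L^2}$; for $\ell\ge 1$ the right-hand side is directly controlled by $\mathbf{E}_N$, and for $\ell=0$ I would invoke Poincar\'e a second time to write $\|v\|_{L^2}\le C|\overline{v}|+C\|Dv\|_{L^2}\le C\mathbf{E}_N^{1/2}$. Summing over $0\le\ell\le N$ should then give
\[
|\mathbf{E}_{N,c}(t)-\mathbf{E}_N(t)|\le Ct^{\alpha-1}\mathbf{E}_N(t)+C\mathbf{E}_N(t)^{3/2}.
\]

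The equivalence is a standard absorption from here: for $t$ sufficiently large the prefactor $Ct^{\alpha-1}$ is small because $\alpha<1$, and under the smallness hypothesis on $\|v\|_{L^\infty}$, which via the Sobolev/Poincar\'e chain corresponds to smallness of $\mathbf{E}_N$, the factor $C\mathbf{E}_N^{1/2}$ is also small. Choosing both below $\tfrac14$ forces $|\mathbf{E}_{N,c}-\mathbf{E}_N|\le\tfrac12\mathbf{E}_N$, and hence $\tfrac12\mathbf{E}_N\le\mathbf{E}_{N,c}\le\tfrac32\mathbf{E}_N$. Plugging this equivalence back into the displayed bound replaces $\mathbf{E}_N$ by a constant multiple of $\mathbf{E}_{N,c}$ on the right-hand side, producing the second estimate as stated.

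There is no real obstacle in this scheme; the entire argument is a routine Cauchy--Schwarz / Sobolev / Poincar\'e exercise tailored to the specific algebraic shape of $E_\ell[v,L]$ and $C_\ell[v,L]$. The only mild bookkeeping point is the $\ell=0$ contribution to the correction, where $v$ appears without any derivative and must be handled by decomposing it as $\overline{v}+(v-\overline{v})$ before invoking Poincar\'e on the zero-mean piece.
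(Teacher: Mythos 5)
Your proposal is correct and follows the same basic route as the paper: Cauchy--Schwarz on the correction terms, a Poincar\'e decomposition $v=\overline v+(v-\overline v)$ to handle the zero-order $\ell=0$ piece, and absorption for $t$ large. In fact your write-up is slightly more careful than the paper's, which only records the bound $|C_\ell[v,L]|\le t^{\alpha-1}\|v\|_{H^\ell}\|L\|_{H^{\ell+1}}$ and then says $t$ is taken large; it never explicitly addresses the two $v$-weighted cross terms in $E_\ell$ (the $\int v_m(D^{\ell+1}v^m,D^{\ell+1}L)$ and $\int\frac{v_mv_n}{1-v^2}(D^{\ell+1}v^m,D^{\ell+1}v^n)$ contributions) even though these are precisely the source of the $\mathbf{E}_{N,c}^{3/2}$ term in the lemma's second displayed estimate and the reason the hypothesis ``$\|v\|_{L^\infty}$ sufficiently small'' is required. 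Your decomposition of the difference into the $v$-weighted family (controlled by $\|v\|_{L^\infty}\lesssim\mathbf{E}_N^{1/2}$ via Sobolev embedding and Poincar\'e) and the $C_\ell$ family (controlled by $t^{\alpha-1}\mathbf{E}_N$) cleanly accounts for both pieces of that estimate and makes the closing absorption argument explicit.
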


\begin{proof}
The lemma follows from applying first Cauchy-Schwarz on the correction terms to obtain 
\eq{\notag
|C_\ell[v,L]|\leq t^{\alpha-1}\|v\|_{H^\ell}\|L\|_{H^{\ell+1}}\leq \frac12 t^{\alpha-1}\Big(\|v\|_{H^\ell}^2+\|L\|^2_{H^{\ell+1}}\Big).
}
Then $t$ is chosen sufficiently large to assure that the right-hand side is smaller than the corresponding terms in the standard energy.  
\end{proof}

%%%

%%%

\section{Main result}
\label{sec:main}

The estimates derived above imply the following theorem, which is the main result of the present paper.

%\subsection{Main result}
\begin{thm}\label{thm-stability}
Let $N\geq 6$, $0<K<1/3$ and $(1-K)\alpha>2/3$ and let $\mu>0$ be chosen such that  $\alpha(1-3K)-\mu>2(1-\alpha)$. Let $(v_0,L_0)\in H^{N+1}(\mathbb T^3)\times H^{N+1}(\mathbb T^3)$ be a vector field and a function, respectively. Then, there exists an $\varepsilon>0$ such that if the initial data $(v_0,L_0)$ satisfies 
\eq{
\overline v_0+ \|D v_0\|_{H^N}+\|D L_0\|_{H^N}<\varepsilon \,,
} 
then the solution $(v(t),L(t))$ to the system \eqref{eq-v-and-L}  exists to the future globally in time and the following decay rates hold:
\eq{\alg{
|\overline v(t)|&\leq C\varepsilon t^{(-\alpha(1-3K)+\mu)/2} \,,\\
\|L(t)-\overline L_0\|_{L^{\infty}}&\leq C\varepsilon \,,\\
\|D v(t)\|_{H^N}+\|D L(t)\|_{H^N}&\leq C\varepsilon t^{(-\alpha(1-3K)+\mu)/2} \,.
}}

\end{thm}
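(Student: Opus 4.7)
The plan is a continuity/bootstrap argument on the total corrected energy. Fix $\sigma := \alpha(1-3K) - \mu$, which by hypothesis satisfies $\sigma > 2(1-\alpha)$, choose a large constant $C_\star$ to be determined, and on the maximal interval of existence $[t_0, T_\star)$ posit
\[
|\overline v(t)|^2 + \mathbf{E}_{N,c}(t) \leq C_\star \varepsilon^2 t^{-\sigma}.
\]
The objective is to strictly improve this to $(C_\star/2)\varepsilon^2 t^{-\sigma}$; once done, the continuation criterion for \eqref{eq-v-and-L} forces $T_\star = \infty$, and the decay of $\|Dv(t)\|_{H^N} + \|DL(t)\|_{H^N}$ follows from coercivity of $\mathbf{E}_{N,c}$.

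Summing Lemma \ref{lem-corr-ell} over $\ell = 0, \dots, N$, dropping the nonnegative rotation term, and using the equivalence lemma to trade the principal positive bracket for $\mathbf{E}_{N,c}$ itself gives the master estimate. The delicate correction-induced cross term $-c(1-3K\alpha)t^{\alpha-2}\int(D^\ell v, D^{\ell+1}L)$ is controlled via Cauchy--Schwarz plus Poincar\'e, using $\overline{D^\ell v}=0$ for $\ell\geq 1$ and $\|v\|_{L^2}\lesssim \|Dv\|_{L^2}+|\overline v|$ for $\ell=0$; the higher-order perturbative remainder $R(t)$ is bounded by Sobolev embedding (for which $N\geq 6$ suffices) by $Ct^{-\alpha}\mathbf{E}_{N,c}^{3/2}$. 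This yields the logarithmic differential inequality
\[
\partial_t \log \mathbf{E}_{N,c}(t) \leq -\frac{\alpha(1-3K)}{t} + Ct^{\alpha-2} + Ct^{-\alpha}\mathbf{E}_{N,c}(t)^{1/2}.
\]
Integrating from $t_0$, the cross-term integral $\int_{t_0}^\infty s^{\alpha-2}\,ds$ converges because $\alpha<1$, producing only a harmless multiplicative constant. Substituting the bootstrap into the perturbative term contributes $CC_\star^{1/2}\varepsilon\int_{t_0}^\infty s^{-\alpha-\sigma/2}\,ds$, which is finite \emph{precisely when} $\alpha+\sigma/2>1$, i.e.\ $\sigma>2(1-\alpha)$---the stability hypothesis. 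Hence $\mathbf{E}_{N,c}(t) \lesssim \varepsilon^2 t^{-\alpha(1-3K)} \leq \varepsilon^2 t^{-\sigma}$ with an absolute constant.

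A parallel ODE integration of Lemma \ref{lem-meanvelocity}, with $|g(t)|\lesssim t^{-\alpha}(\mathbf{E}_{N,c}+|\overline v|^2)$, yields $|\overline v(t)| \lesssim \varepsilon t^{-\alpha(1-3K)} + \varepsilon^2 t^{1-\alpha-\sigma}$, whose second summand is controlled by $\varepsilon t^{-\sigma/2}$ under the same condition $\sigma \geq 2(1-\alpha)$. Choosing $C_\star$ sufficiently large and $\varepsilon$ sufficiently small strictly improves both halves of the bootstrap, closing the argument. For the $L^\infty$ statement, decompose $L-\overline L_0 = (L-\overline L) + (\overline L - \overline L_0)$: the first piece is controlled via Sobolev embedding and Poincar\'e on $DL$, while the second comes from integrating the spatial mean of \eqref{eq-L}, whose right-hand side under the bootstrap is $O(\varepsilon^2 t^{-\alpha-\sigma})$---globally integrable since $\alpha+\sigma > 2-\alpha > 1$. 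The key obstacle is the marginal balance between the cross-term weight $t^{\alpha-2}$ and the cubic perturbative weight $t^{-\alpha}\mathbf{E}_{N,c}^{1/2}$: neither is integrable against $1/t$, but both become integrable in their natural scales exactly when $\sigma>2(1-\alpha)$---precisely the stability hypothesis $(1-K)\alpha>2/3$, expected to be sharp per the numerical work cited in the introduction.
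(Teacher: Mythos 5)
Your proposal is correct and takes essentially the same approach as the paper: it runs on the same ingredients (Lemmas \ref{lem-meanvelocity}, \ref{lem-cor-zero}, \ref{lem-corr-ell} and the energy-equivalence lemma) and exploits the same condition $\alpha(1-3K)-\mu>2(1-\alpha)$ to render the $t^{\alpha-2}$ correction cross term and the $t^{-\alpha}\mathbf{E}_{N,c}^{3/2}$ cubic term harmless. The only difference is the final bookkeeping—where the paper shows the weighted energy $t^{\alpha(1-3K)-\mu}\mathbf{E}_{N,c}(t)$ is non-increasing, you close a bootstrap by integrating $\partial_t\log\mathbf{E}_{N,c}$—which is equivalent in substance.
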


%%%

\begin{proof}

We consider now initial data with the conditions of Theorem \ref{thm-stability}, where $\varepsilon$ is the size of the initial data. 
We choose $t_0>1$ and $\varepsilon$ sufficiently small to assure that 
\eq{\alg{\notag
|v(t)|&<1
}}
on an interval $[t_0,T)$. Using Lemma \ref{lem-meanvelocity}, Lemma \ref{lem-cor-zero} and Lemma \ref{lem-corr-ell}
we obtain an inequality of the form
\eq{\alg{\notag
\p_t \mathbf{E}_{N,c}(t)&\leq-\frac{2\alpha(1-3K)}{t}\overline v(t)^2 -\frac{\alpha(1-3K)}{t} E_{N}(t)+C \frac{1+t^{1-\alpha}}{t} \mathbf E_N(t)^{3/2}+C \frac{t^{\alpha-1}}{t} \mathbf{E}_N(t),\\
&\leq -\frac{\alpha(1-3K)}{t} \mathbf E_{N,c}(t)+ C\frac{1+t^{1-\alpha}}{t} \mathbf E_{N,c}(t)^{3/2}+C \frac{t^{\alpha-1}}{t} \mathbf{E}_{N,c}(t)
}}
where we have absorbed any fourth-order or higher order terms into the third order term by assuming sufficient smallness and using the equivalence of energies.

Then for $\mu$ with $\alpha(1-3K)-\mu>2(1-\alpha)$, define 
\eq{
\widetilde{ \mathbf E}_{N,c}(t):= t^{\alpha(1-3K)-\mu}\mathbf E_{N,c}(t)\,.
} 
Hence,
\eq{\alg{\notag
\p_t\widetilde{ \mathbf E}_{N,c}(t)&\leq -\frac{\mu}{t}\widetilde{ \mathbf E}_{N,c}(t)+ C t^{\alpha(1-3K)-\mu}\left(\frac{1+t^{1-\alpha}}{t} \mathbf E_{N,c}(t)^{3/2}+ \frac{t^{\alpha-1}}{t} \mathbf{E}_{N,c}(t)
\right)\\
&= -\frac{\mu}{t}\widetilde{ \mathbf E}_{N,c}(t)+ \frac{C}t \left(t^{\frac{-\alpha(1-3K)+\mu}2}{(1+t^{1-\alpha}}) \widetilde{\mathbf E}_{N,c}(t)^{3/2}+ {t^{\alpha-1}} \widetilde{\mathbf{E}}_{N,c}(t)
\right).
}}
For sufficiently small initial data this implies 
\eq{\notag
\widetilde{ \mathbf E}_{N,c}(t)\leq \widetilde{ \mathbf E}_{N,c}(t_0)\,,
}
and in turn
\eq{\notag
{ \mathbf E}_{N,c}(t)\leq (t/t_0)^{-\alpha(1-3K)+\mu} { \mathbf E}_{N,c}(t_0)\,.
}
The theorem follows by the equivalence of energies. The bound for the rescaled energy density $L$ follows by using the decay estimates above in combination with an evolution equation for the mean value of $L$.
\end{proof}

%\subsection{Physical variables}

We next translate the decay rates back into the original ``physical" variables $(\rho,u)$.
Inverting the transformation \eqref{transformation} yields
\eq{\alg{
u^j&=t^{-\alpha}\frac{v^j}{\sqrt{1-v^2}}, \quad
\rho=\frac{\exp (L)}{t^{3\alpha(1+K).}}
}}
We define the rescaled energy density by $\boldsymbol{\rho}(t)=\exp(L(t))$ and the rescaled velocity by $\boldsymbol{u}(t)=t^{\alpha}u(t)$.
\begin{corol}
The decay rates for the physical variables in Theorem \ref{thm-stability} are the following:
\eq{\alg{
|\overline {\boldsymbol{u}}(t)|&\leq C\varepsilon t^{(-\alpha(1-3K)+\mu)/2}\,,\\
|\boldsymbol{\rho}(t)-\boldsymbol{\rho}(t_0)|&\leq C\varepsilon\,,\\
\|\nabla \boldsymbol u(t)\|_{H^N}+\|\nabla \boldsymbol\rho(t)\|_{H^N}&\leq C\varepsilon t^{(-\alpha(1-3K)+\mu)/2}\,.
}}
\end{corol}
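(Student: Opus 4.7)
The plan is to invert the transformation \eqref{transformation} and translate each estimate from Theorem \ref{thm-stability} through the identities $\boldsymbol{\rho}=e^{L}$ and $\boldsymbol{u}^j=v^j/\sqrt{1-v^2}$. Both of these are smooth functions of their arguments on the bootstrap region $|v|<1/10$, and the $L^\infty$ bound on $L$ keeps $e^L$ uniformly bounded, so the map $(L,v)\mapsto(\boldsymbol{\rho},\boldsymbol{u})$ is uniformly Lipschitz and differentiable there. The claimed decay rates then follow by Taylor expansion and standard Moser-type product/composition estimates in $H^N$.

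For the mean velocity, I would split
\[
\overline{\boldsymbol{u}}^{\,j} \;=\; \overline{v}^{\,j} \;+\; \int v^j\bigl((1-v^2)^{-1/2}-1\bigr),
\]
whose remainder integrand is pointwise $O(|v|^3)$. Combining Sobolev embedding with Poincar\'e to get $\|v\|_{L^\infty}\lesssim|\overline{v}|+\|Dv\|_{H^N}$ and $\|v\|_{L^2}^2\lesssim|\overline{v}|^2+\|Dv\|_{L^2}^2$, Theorem \ref{thm-stability} yields
\[
\Bigl|\int v^j\bigl((1-v^2)^{-1/2}-1\bigr)\Bigr|\;\lesssim\;\|v\|_{L^\infty}\|v\|_{L^2}^2\;\lesssim\; \varepsilon^{3}\,t^{-3(\alpha(1-3K)-\mu)/2},
\]
which is dominated by the leading term $|\overline{v}^{\,j}|\lesssim\varepsilon\,t^{-(\alpha(1-3K)-\mu)/2}$ for $\varepsilon$ small.

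For the density bound, Theorem \ref{thm-stability} gives $\|L(t)-\overline{L}_0\|_{L^\infty}\leq C\varepsilon$, while a Poincar\'e--Sobolev estimate applied at $t_0$ yields $\|L(t_0)-\overline{L}_0\|_{L^\infty}\lesssim \|DL_0\|_{H^N}\lesssim \varepsilon$. Hence $|L(t,x)-L(t_0,x)|\lesssim \varepsilon$ pointwise, and since $\|L\|_{L^\infty}$ is uniformly bounded by $|\overline{L}_0|+C\varepsilon$, the mean value theorem applied to $\exp$ on this bounded range produces
\[
|\boldsymbol{\rho}(t,x)-\boldsymbol{\rho}(t_0,x)|\;\leq\; e^{|\overline{L}_0|+C\varepsilon}\,|L(t,x)-L(t_0,x)|\;\lesssim\; \varepsilon.
\]

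For the $H^N$ gradient bounds, I would apply the standard Moser product and composition estimates. Writing $\nabla \boldsymbol{\rho}=e^L \nabla L$ and $\partial_i\boldsymbol{u}^j=\partial_i\bigl(v^j(1-v^2)^{-1/2}\bigr)$, the smallness of $v$ in $L^\infty$ together with the uniform bound on $\|L\|_{L^\infty}$ gives $\|\nabla \boldsymbol{\rho}\|_{H^N}\lesssim \|DL\|_{H^N}$ and $\|\nabla \boldsymbol{u}\|_{H^N}\lesssim \|Dv\|_{H^N}$, with implicit constants depending only on those $L^\infty$ norms. Inserting the decay rates for $\|DL(t)\|_{H^N}$ and $\|Dv(t)\|_{H^N}$ from Theorem \ref{thm-stability} produces the claimed bounds. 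The only mildly technical ingredient here is the Moser composition estimate, which is classical and requires nothing beyond the smallness and boundedness supplied by the theorem; consequently I anticipate no real obstacle, and the decay rates transfer verbatim from $(L,v)$ to $(\boldsymbol{\rho},\boldsymbol{u})$.
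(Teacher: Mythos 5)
Your proposal is correct and follows the same route the paper takes (the paper simply inverts the transformation \eqref{transformation} to get $\boldsymbol{\rho}=e^{L}$, $\boldsymbol{u}^j=v^j(1-v^2)^{-1/2}$ and leaves the translation of the decay rates implicit); your splitting of $\overline{\boldsymbol{u}}$, the mean-value-theorem argument for $e^{L}$, and the Moser product/composition estimates for the $H^{N}$ gradient bounds are exactly the standard steps needed to make that translation rigorous. No gaps.
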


%%%%%%
%%%%%%
%%%%%%

\section{Implications for fluids on linearly expanding cosmologies}
\label{sec:linear}
We show in this section that an adapted version of the corrected $L^2$-energy \eqref{corr-energy} can be applied to analyze the behaviour of fluids with $K\in (0,1/3)$ on linearly expanding cosmological spacetimes. In consequence, we obtain an alternative proof of the main stability theorems in \cite{FOW:CMP_2021} and \cite{fajman2023stability}.
In comparison with the decelerated regime, where $\alpha<1$, in the case of linear expansion the power law rate is $\alpha=1$. This implies that the error terms, which are of higher order in the energy estimates, now have a coefficient of $t^{-1}$. To compensate these terms in the corresponding energy estimates, it is sufficient to show \emph{any} non-trivial decay of the energy. This is crucial, since the correction term in the energy might otherwise violate coercivity on account of the fact that the decaying time-factor in the correction term  (see \eqref{correction}) is not present for $\alpha=1$.
We give a new proof of the following theorem, which was obtained in \cite{fajman2023stability} using Fuchsian methods.

\begin{thm}[Fluid stability under linear expansion]\label{thm-stability-linear}
Let $0<K<1/3$ and $\alpha=1$. Let $(v_0,L_0)\in H^{N+1}(\mathbb T^3)\times H^{N+1}(\mathbb T^3)$ be a vector field and a function, respectively. Then there exists an $\varepsilon>0$ and a $\delta>0$ such that for initial data $(v_0,L_0)$ satisfying
\eq{
\overline v_0+ \| D v_0\|_{H^N}+\| D L_0\|_{H^N}<\varepsilon,
} 
the solution $(v(t),L(t))$ to the system \eqref{eq-v-and-L}  exists to the future globally in time and the following decay rates hold:
\eq{\alg{
|\overline v(t)|&\leq C\varepsilon t^{-\delta/2} \,,\\
\|L(t)-\overline L_0\|_{L^{\infty}}&\leq C\varepsilon \,,\\
\| D v(t)\|_{H^N}+\| D L(t)\|_{H^N}&\leq C\varepsilon t^{-\delta/2}\,.
}}

\end{thm}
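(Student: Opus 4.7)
My plan is to recycle the corrected-energy machinery of Section \ref{sec:energy} with $\alpha = 1$, modifying only the correction term, which loses its $t^{\alpha-1}$ decay prefactor. Concretely, I would define $\tilde C_\ell[v,L] := \int (D^\ell v, D^{\ell+1} L)$ (no time-prefactor) and $\tilde E_{\ell,c} := E_\ell[v,L] + c\,\tilde C_\ell[v,L]$ for a small constant $c > 0$ to be fixed below, and keep the total energy $\tilde{\mathbf E}_{N,c}(t) := \bar v(t)^2 + \sum_{0\leq\ell\leq N} \tilde E_{\ell,c}[v,L]$, whose mean-velocity part is still governed by Lemma \ref{lem-meanvelocity}.

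Redoing the computation of Lemma \ref{lem-corr-ell} at $\alpha = 1$ with general $c$, the only structural change is that $\partial_t(t^{\alpha-1}) \equiv 0$ cancels the usual $c(\alpha - 1)t^{\alpha-2}$-contribution, and the identity becomes
\begin{equation*}
\partial_t \tilde E_{\ell,c} = -\frac{(1-3K) - c(1+K)}{t}\|D^\ell \mathrm{div}\,v\|_{L^2}^2 - \frac{1 - 3K}{t}\|D^\ell \mathrm{rot}\,v\|_{L^2}^2 - \frac{cK}{(1+K)t}\|D^{\ell+1} L\|_{L^2}^2 - \frac{c(1-3K)}{t}\int (D^\ell v, D^{\ell+1}L) + R(t).
\end{equation*}
Cauchy-Schwarz and Young's inequality with a weight $\eta$, together with the Poincar\'e estimate $\|v\|^2_{L^2} \leq C_P \|Dv\|^2_{L^2} + |\mathbb T^3| \bar v^2$ at $\ell = 0$ (and $\|D^\ell v\|^2_{L^2}\leq C_P\|D^{\ell+1}v\|^2_{L^2}$ automatically for $\ell \geq 1$, since $\int D^\ell v = 0$), let me absorb the cross term into small fractions of the three coercive contributions above. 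The hypothesis $K \in (0, 1/3)$ keeps all three main coefficients strictly positive before absorption, and choosing $c$ sufficiently small relative to $K$ and $C_P$ then leaves a net decay coefficient $\delta = \delta(K, c) > 0$ for the full energy.

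The point where the $\alpha = 1$ analysis genuinely departs from the decelerated case is coercivity. Without the $t^{\alpha-1}$ factor the correction is comparable to the energy, $|\tilde C_\ell| \leq \tfrac12(\|D^\ell v\|^2_{L^2} + \|D^{\ell+1} L\|^2_{L^2}) \leq C (E_\ell + \bar v^2)$ after Poincar\'e, so further shrinking $c$ preserves the \emph{uniform} equivalence $\tfrac12 (\mathbf E_N + \bar v^2) \leq \tilde{\mathbf E}_{N,c} \leq 2 (\mathbf E_N + \bar v^2)$. Feeding this back gives
\begin{equation*}
\partial_t \tilde{\mathbf E}_{N,c}(t) \leq -\frac{\delta}{t}\tilde{\mathbf E}_{N,c}(t) + \frac{C}{t}\tilde{\mathbf E}_{N,c}(t)^{3/2},
\end{equation*}
whose higher-order error now carries only coefficient $1/t$, because $t^{1-\alpha} = 1$. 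A bootstrap on $t^\delta \tilde{\mathbf E}_{N,c}$, following the argument at the end of the proof of Theorem \ref{thm-stability} verbatim, then yields global existence and the decay estimates with rate $\delta/2$. The $L^\infty$-bound on $L - \bar L_0$ follows from the evolution equation for $\bar L$ combined with Poincar\'e, exactly as in the decelerated case.

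The main obstacle is the tension between decay and coercivity in the choice of $c$: the ``optimal'' $c = (1-3K)/(2(1+K))$ inherited from the decelerated regime would maximize $\delta$, but at $\alpha = 1$ it risks violating the uniform equivalence of energies because the correction is no longer small by a time factor. One therefore takes $c$ small enough that $\tilde C_\ell$ stays subordinate to the main energy, and only extracts a possibly small but strictly positive decay exponent $\delta$; this is consistent with the observation preceding the theorem that any non-trivial decay suffices in the linear regime.
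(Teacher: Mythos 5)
Your proposal follows the paper's own proof essentially verbatim: the same correction term without the $t^{\alpha-1}$ prefactor, the same choice to take the correction coefficient small (the paper calls it $\delta$, you call it $c$) in order to preserve coercivity of the corrected energy at $\alpha = 1$ rather than to maximize the decay rate, and the same absorption of the cross term via Cauchy--Schwarz, Poincar\'e and Young followed by a Gr\"onwall/bootstrap on the time-weighted energy. Your displayed energy identity agrees with the one the paper derives from Proposition~\ref{en-est-L2} and the correction-term lemma after splitting $\|D^{\ell+1}v\|_{L^2}^2$ into divergence and rotation parts, so the argument is correct and takes the same approach.
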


\begin{proof}
We sketch the proof in terms of the $L^2$-energy of lowest order. The higher order case works accordingly. We define the first corrected energy, where $\delta>0$ is to be determined later, by
\eq{
E^{\alpha=1}_{0,\delta}[v,L]:=E_{0}[v,L]+\delta \int _{\mathbb T^3} v^i \partial_i L ,
}
where $ E_{0} $ is defined in the same way as in \eqref{def-en}.

We choose $\delta>0$ sufficiently small to assure that the eventual total energy corresponding to \eqref{total-energy} is coercive, despite the fact that no decaying time-factor is present here. 
We are left to show a suitable decay estimate. We combine Proposition \ref{en-est-L2}, 
to obtain the following energy identity.
\eq{\alg{\notag
\p_t E^{\alpha=1}_{0,\delta}[v,L]&=-\frac{1-3K}{t}\int \partial_i v^j\partial^iv_j  -\delta t^{-1}(1-3K)\int v^i\p_i L \\
&\qquad-\frac{\delta}{t}\frac{K}{1+K}\int \p^i L\p_iL  +\delta\frac{(1+K)}{t}\int (\p_j v^j)^2 +\tilde f(t),
}}
where $\tilde f(t)$ obeys an estimate of the form \eqref{f-est} with $\alpha=1$. 
We deduce from the above identity 
\eq{
\alg{\notag
\p_t E^{\alpha=1}_{0,\delta}[v,L]&\leq-\frac{\delta/2}{t}E_{0,\delta}^{\alpha=1}[v,L]
-\Big(\frac{1-3K}{t}-\frac{\delta(1+K)}t-\frac{\delta/4}t\Big)\int \partial_i v^j\partial^iv_j  \\
&\qquad+\Big(\frac{\delta^2/2}{t}-\delta t^{-1}(1-3K)\Big)\int v^i\p_i L
-\frac{\delta}{t}\frac{K}{1+K}\Big(1-\frac{1}{4}\frac{1}{(1+K)}\Big)\int \p^i L\p_iL  +\tilde f(t).
}
}
Here, we have ignored the additional decay inducing term for the rotation of $v$. In addition, higher order terms like the second and third in \eqref{def-en} have been absorbed into $ \tilde{f}(t) $. 
For sufficiently small $\delta$ the indefinite term can be absorbed by the manifestly negative terms by using Cauchy-Schwarz, Poincar\'e's inequality and (possibly) Young's inequality. In combination with the structure of the term $\tilde f(t)$ this implies an estimate of the form
\eq{\notag
\p_t E^{\alpha=1}_{0,\delta}[v,L]\leq-\frac{\delta/2}{t}E_{0,\delta}^{\alpha=1}[v,L]+\frac Ct E^{\alpha=1}_{0,\delta}[v,L]^{3/2}. 
}
This implies the desired decay rate.
\end{proof}

%%%%%%
%%%%%%
%%%%%%

%\newpage

\section{Instability for dust and radiation in the decelerated regime}
\label{sec:instability}
In this section we show that quiet dust and radiation solutions develop instabilities, in the form of shocks, on fixed FLRW spacetimes \eqref{spacetime} when $\alpha\leq 1/2$ and $\alpha\leq 1$, respectively.

\subsection{Instability for radiation fluids}
In this section we show shock formation for radiation fluids for specific arbitrarily small perturbations of homogeneous data. We use the formulation of the Euler equations as balance laws, as presented by Rendall and St\aa hl \cite{RendallStahl2008}.  Note that our data is small, in contrast to \cite{RendallStahl2008} where initial data is \textit{a priori} large.\\

%The metric is 
%\begin{equation}\label{rad:metric}
%	M =[1, \infty) \times \T^3, \quad \bar{g}=-d\temp^2+\temp^{2\alpha}\delta_{ij} dx^i dx^j,
%\end{equation}
%
%where we consider the range $ \alpha\in (0,1] $.

We consider again spacetimes of the form \eqref{spacetime}, where we rename the $t$-variable by $\bar t$ and choose $a(\bar t)=\bar t^\alpha$ for $\alpha\in(0,1]$. Hence, in the coordinates $ (\bar{t},x^{i}) $, we have that
\begin{equation}\label{gbar}
	\bar{g}=-d\bar{t}^{\,2}+a(\bar{t})\delta_{ij}dx^{i}dx^{j}.
\end{equation}
In these coordinates we compute the second fundamental form $ \bar{k} $ in the standard coordinate frame $ dx^{i} $ to be 
\begin{equation}\notag%\label{rad:sff}
	\bar{k}_{ij}=-\frac{1}{2N}\partial_{\temp}\bar{g}_{ij}=-\frac12 \partial_{\temp} (\temp^{2\alpha}\delta_{ij}) = -\alpha \temp^{2\alpha-1}\delta_{ij}\,,
\end{equation}
where $ N=1 $ is the lapse function.
The mean curvature is $\text{tr}(\bar{k})=\temp^{-2\alpha}\delta^{ij} \bar{k}_{ij}=-3\alpha/\temp$. 
Thus we introduce the new constant mean curvature time $ t $ via
\begin{equation}\notag%\label{rad:cmc}
	t:=-\frac{3\alpha}{\temp}, \quad t \in \big[-\frac{1}{3\alpha},0\big)
\end{equation}
so that $t\nearrow  0$ is the expanding direction. 
This gives $dt = \frac{3\alpha}{\temp^2} d\temp$ and so the metric \eqref{gbar}, expressed in coordinates $ (t,x^{i}) $, takes the form 
\begin{equation}\notag%\label{rad:metriccmc}
	g=-\frac{9\alpha^2}{t^4}dt^2 +\frac{(3\alpha)^{2\alpha}}{|t|^{2\alpha}}\delta_{ij} dx^i dx^j.
\end{equation}
%\begin{comment}
%Now in a standard setup, where the normal is proportional to $ \partial_{t} $, we would have that
%\begin{equation}\label{Lie1}
%	\mathcal{L}_{m}g_{ab}=\big(\partial_{t}-\mathcal{L}_{\beta}\big)g_{ab}=-2\alpha k_{ab}.
%\end{equation}
%However, in these coordinates, the future is located in the direction of $ -\partial_{t} $, and hence we have that
%\begin{equation}\label{Lie2}
%	\mathcal{L}_{m}g_{ab}=\big(-\partial_{t}-\mathcal{L}_{\beta}\big)g_{ab}=-2\alpha k_{ab}.
%\end{equation}
%\end{comment}
Hence, again in the coordinate frame, 
\begin{equation}\notag%\label{rad:kcmc}
	k_{ij}=-\frac{1}{2N}\partial_{t}g_{ij}=-|t|^{-2\alpha+1}\frac{(3\alpha)^{2\alpha}}{3}\delta_{ij}\,,
\end{equation} 
and we can check $\text{tr}(k) = -|t|=t$. 
Matching with \cite[eq. (1)]{RendallStahl2008}, where the metric takes the form
\eq{\notag
g=-N^2dt^2+A^2[(dx+\beta dt)^2+a^2(dy^2+dz^2)]
}
we find by comparison
\begin{equation}\notag%\label{rad:planesymmetry}
	N=\frac{3\alpha}{t^2},\quad A = \frac{(3\alpha)^\alpha}{|t|^\alpha},\quad \beta = 0,\quad a = 1.
\end{equation}
Additionally, the orthonormal frame introduced in \cite[eq. (2)]{RendallStahl2008}, becomes in our setting:
\begin{equation}\notag%\label{rad:frame}
	e_0 = \frac{t^2}{3\alpha}\p_t, \quad e_1 = \frac{|t|^{\alpha}}{(3\alpha)^\alpha}\p_x, \quad e_2 = \frac{t^\alpha}{(3\alpha)^\alpha}\p_y, \quad e_3 = \frac{t^\alpha}{(3\alpha)^\alpha}\p_z\,.
\end{equation}
We compute the second fundamental form $ \tilde{k} $ in the co-frame $ (e^{i}) $ to be:
\begin{equation}\notag%\label{rad:sffframe}
	k_{ij}=\tilde{k}_{kl}e^{k}{}_{i}e^{l}{}_{j}=\tilde{k}_{ij}\frac{(3\alpha)^{2\alpha}}{|t|^{2\alpha}} \Rightarrow \tilde{k}_{ij}=\frac{t}{3}\delta_{ij}.
\end{equation}
Following \cite[eq. (3)]{RendallStahl2008} we write the second fundamental form in $ (e^{i}) $ as
\begin{equation}\label{rad:Kdef}
	\tilde{k}_{ij}=-\frac12(\tK-t)\delta_{ij}+\frac12(3\tK-t)\delta_{i1}\delta_{1j}
\end{equation}
Comparing with \eqref{rad:Kdef}, we read off that $ \tK=\frac{t}{3} $. 

%\begin{comment}
%There is a subtlety when considering the \cite[eq. 11]{RendallStahl}: Since we are choosing the the same frame as well as the same variables, the equations of motion are almost the same. The difference lies in the occurrence of $ K $. $ K $ enters the equations of motion via the Christoffel symbols $ \Gamma^{0}{}_{ab} $ and $ \Gamma^{a}{}_{0b} $, which are usually (for $ 0 $ shift) given by
%\begin{equation}
%	\Gamma^{0}{}_{ab}=-\alpha^{-1} k_{ab}, \quad \Gamma^{a}{}_{0b}=-\alpha k^{a}_{b}.
%\end{equation}
%However, these are derived using the \eqref{Lie1}. Using the correct \eqref{Lie2}, we end up with
%\begin{equation}
%	\Gamma^{0}{}_{ab}=\alpha^{-1} k_{ab}, \quad \Gamma^{a}{}_{0b}=\alpha k^{a}_{b}.
%\end{equation}
%Hence, we can use the same equations of motion as in \cite[eq.~11]{RendallStahl}, but substituting $ K $ by $ -K=\frac{t}{3} $. 
%\end{comment}

We assume a linear equation of state $p=K \rho$, and define the \emph{fluid enthalpy} $ \varphi $ by
\begin{equation}\notag
	 \varphi = \int_{\rho_c}^\rho \frac{\sqrt{K}}{1+K} \frac{dm}{m} = \frac{\sqrt{K}}{1+K} \ln(\rho/\rho_c)\,,
\end{equation}
where $\rho_c>0$ is a constant. Also following \cite[eq (7)]{RendallStahl2008}, we define the velocity parameter $ u $ via
\begin{equation}\notag%\label{rad:udef}
	u^0=(1-u^2)^{-1/2},\quad u^1 = u(1-u^2)^{-1/2},\quad u^2=u^3=0 \,.
\end{equation}
Observing that $e_1(A)=0=e_1(N)$ and following \cite[eq (14)]{RendallStahl2008}, we define the derivative operators
\begin{align*}
    D_+ &= e_0 + \frac{u+\sqrt{K}}{1+u\sqrt{K}}e_1=\tfrac{t^2}{3\alpha}\p_t + \frac{u+\sqrt{K}}{1+u\sqrt{K}}\Big(\frac{|t|}{3\alpha}\Big)^\alpha \p_x\,, \\ 
    D_- &= e_0 + \frac{u-\sqrt{K}}{1-u\sqrt{K}}e_1=\tfrac{t^2}{3\alpha}\p_t + \frac{u-\sqrt{K}}{1-u\sqrt{K}}\Big(\frac{|t|}{3\alpha}\Big)^\alpha \p_x\,,
\end{align*}
and from \cite[eq (15)]{RendallStahl2008} 
\begin{align*}
    r &= \varphi + \frac12 \ln{\frac{1+u}{1-u}} = \ln\Bigg(\left(\frac{\rho}{\rho_c}\right)^{\frac{\sqrt{K}}{1+K}}\Big( \frac{1+u}{1-u}\Big)^{1/2} \Bigg)\,,\\
    s &= \varphi - \frac12 \ln{\frac{1+u}{1-u}} = \ln\Bigg(\left(\frac{\rho}{\rho_c}\right)^{\frac{\sqrt{K}}{1+K}}\Big( \frac{1-u}{1+u}\Big)^{1/2} \Bigg)\,.
\end{align*}
Using $ \tK=\frac{t}{3} $, the balance laws from \cite[eq. (16)]{RendallStahl2008} then reduce to
\begin{equation}\label{balance_law1}
	D_+r = t\frac{\sqrt{K}+u/3}{1+\sqrt{K} u}\,, \quad D_-s = t\frac{\sqrt{K}-u/3}{1-\sqrt{K} u}\,.
\end{equation}

\subsubsection{Transformation to $R, S$}
First, we note that 
\begin{equation}\label{rad:changeofvar} 
	\rho = \rho_c 	\exp\left( \frac{1+K}{2\sqrt{K}}(r+s)\right)\,, \quad u = \frac{e^r-e^s}{e^r+e^s}\,.
\end{equation}
Motivated by this, we introduce the variables $R = e^r$ and $S = e^s$. Multiplying \eqref{balance_law1} by $e^r$ and $e^s$ respectively (as well as $(3\alpha)/t^2$) we use \eqref{balance_law1} to arrive at the equations
\begin{equation}\label{balance_law2}\begin{split}
\p_t R + \kappa(R,S) \frac{(3\alpha)^{1-\alpha}}{|t|^{2-\alpha}} \p_x R &= \frac{3\alpha}{t} f_1(R,S)R\,, \\ 
\p_t S+ \lambda(R,S) \frac{(3\alpha)^{1-\alpha}}{|t|^{2-\alpha}} \p_x S&= \frac{3\alpha}{t} f_2(R,S)S\,,
\end{split}
\end{equation}
where
\begin{equation}\notag%\label{f12}
f_1(R,S) \define \frac13 \frac{(1+3\sqrt{K})R + (3\sqrt{K}-1)S}{(1+\sqrt{K})R+(1-\sqrt{K})S}\,,
\quad
f_2(R,S) \define \frac13 \frac{(3\sqrt{K}-1)R + (3\sqrt{K}+1)S}{(1-\sqrt{K})R+(1+\sqrt{K})S}\,,
\end{equation}
and 
\begin{equation}\notag\begin{split}
\kappa(R,S) &\define\frac{u+\sqrt{K}}{1+\sqrt{K} u} = \frac{(1+\sqrt{K})R-(1-\sqrt{K})S}{(1+\sqrt{K})R+(1-\sqrt{K})S}\,, \\
\lambda(R,S) &\define \frac{u-\sqrt{K}}{1-\sqrt{K}u} =  \frac{(1-\sqrt{K})R-(1+\sqrt{K})S}{(1-\sqrt{K})R+(1+\sqrt{K})S} \,.
\end{split}\end{equation}
For later use, we compute the following 
\begin{equation}\label{derivs_kappa_lambda}
	\begin{split}
		\partial_{1} \kappa(x,y) &= \frac{2(1-K)y}{((1+\sqrt{K})x+(1-\sqrt{K})y)^2}\,, \quad 
		\partial_{2} \kappa(x,y) = \frac{-2(1-K)x}{((1+\sqrt{K})x+(1-\sqrt{K})y)^2}\,, \\
		\partial_{1} \lambda(x,y) &= \frac{2(1-K)y}{((1-\sqrt{K})x+(1+\sqrt{K})y)^2}\,, \quad
		\partial_{2} \lambda(x,y) = \frac{-2(1-K)y}{((1-\sqrt{K})x+(1+\sqrt{K})y)^2} \,.
\end{split}\end{equation}

\subsubsection{Coordinate change to physical time}

Now that we have the equations of motion  \eqref{balance_law2} for the fluid, we transform back to physical time $ \temp $. Denoting $ \bar{f}(\temp)=(f\circ t)(\temp) $, we find
\begin{equation}\label{rad:eomphysical}
	\begin{aligned}
	\partial_{\temp}\bar{R}+\kappa(\bar{R},\bar{S})\temp^{-\alpha}\partial_{x}\bar{R}&=-3\alpha \temp^{-1}f_{1}(\bar{R},\bar{S})\,,\quad 
	\partial_{\temp}\bar{S}+\lambda(\bar{R},\bar{S})\temp^{-\alpha}\partial_{x}\bar{S}&=-3\alpha \temp^{-1}f_{2}(\bar{R},\bar{S})\,.
	\end{aligned}
\end{equation}
In the case of radiation, a straightforward calculation shows that
\begin{equation}\notag 
	f_{1}(x,y)|_{K=\frac{1}{3}}=f_{2}(x,y)|_{K=\frac{1}{3}}=\frac{1}{\sqrt{3}}.
\end{equation}
Therefore, \eqref{rad:eomphysical} simplifies to 
\begin{equation}\label{rad:eomphysicalrad}
	\begin{aligned}
		\partial_{\temp}\bar{R}+\kappa(\bar{R},\bar{S})\temp^{-\alpha}\partial_{x}\bar{R}&=-\sqrt{3}\alpha \temp^{-1}\,, \qquad 
		\partial_{\temp}\bar{S}+\lambda(\bar{R},\bar{S})\temp^{-\alpha}\partial_{x}\bar{s}=-\sqrt{3}\alpha \temp^{-1}.
	\end{aligned}
\end{equation}

\subsubsection{Transformation to $\cR, \cS$}
We introduce
\begin{equation}\label{rad:caldef}
R \define t^{\sqrt{3}\alpha} \cR, \quad S \define t^{\sqrt{3}\alpha} \cS.
\end{equation}
Using \eqref{rad:eomphysicalrad} we find 
\begin{equation}\label{rad:eomrescaled}
	\begin{aligned}
		\partial_{\temp}\mathcal{R}+\kappa(\mathcal{R},\mathcal{S})\temp^{-\alpha}\partial_{x}\mathcal{R}&=0\,,\quad
		\partial_{\temp}\mathcal{S}+\lambda(\mathcal{R},\mathcal{S})\temp^{-\alpha}\partial_{x}\mathcal{S}=0 \,.
	\end{aligned}
\end{equation}
Note that we used that $ \kappa(c R,cS)=\kappa(R, S) $ as well as $ \lambda(cR,cS)=\lambda(x,y) $ for any $c\neq 0$. 

\subsubsection{Characteristic Coordinates}
We now introduce characteristic coordinates $ (\tau,\xi) $
\begin{equation}\label{rad:charcoor}
	(\temp,x) = (\tau, \phi(\tau, \xi))
\end{equation}
chosen so that
\begin{equation}\notag%\label{kappa_lagrange_coord}
 \phi(\tau_{0}, \xi) = \xi, \quad \p_{\tau} \phi(\tau, \xi) = \kappa(\tilde\cR(\tau,\xi),\tilde\cS(\tau,\xi)) \tau^{-\alpha}
\end{equation}
where for any function $f=f(t,x)$ we write
$$ \tilde{f}(\tau,\xi):= f(\tau, \phi(\tau,\xi))\,.
$$

\subsubsection{Deriving a Riccati-type equation for radiation}

Commuting \eqref{rad:eomrescaled} with $ \partial_{x} $ and denoting $ \partial_{x}\mathcal{R}=\mathcal{W} $ we find
\begin{equation}\notag%\label{rad:eomW}
	\partial_{\temp}\mathcal{W}=-\temp^{-\alpha}\kappa(\mathcal{R},\mathcal{S})\partial_{x}\mathcal{W}-\temp^{-\alpha}(\partial_{1}\kappa)(\mathcal{R},\mathcal{S})\mathcal{W}^{2}-\temp^{-\alpha}(\partial_{2}\kappa)(\mathcal{R},\mathcal{S})\mathcal{W}\partial_{x}\mathcal{S} \,.
\end{equation}
Furthermore, we remember that due to  \eqref{rad:eomrescaled}, we have that
\begin{equation}\notag
	\partial_{\temp} \cS+ \temp^{-\alpha}\lambda(R,S) \p_x \cS= 0 \,.
\end{equation}
Note that this equation is solved by the constant solution $ \cS(t,x)=c $, where $ c\in \mathbb{R} $. Hence, if we assume that solutions $ (\cR,\cS) $ exist on an interval $ [\temp_{0},\temp_{1}) $, and $ \cS(\temp_{0})=c $, then $ \cS(\temp)=c $ for all $ \temp\in [
\temp_{0},\temp_{1}) $ and $ x $.  Thus, if $ \mathcal{S} $ is chosen to be homogeneous in $ x $ initially, then we have that $ \partial_{x}\mathcal{S} \equiv 0 $. Therefore, we are left with
\begin{equation}\notag%\label{rad:eomWS}
	\partial_{\temp}\mathcal{W}=-\temp^{-\alpha}\kappa(\mathcal{R},\mathcal{S})\partial_{x}\mathcal{W}-\temp^{-\alpha}(\partial_{1}\kappa)(\mathcal{R},\mathcal{S})\mathcal{W}^{2}\,.
\end{equation}
In coordinates $ (\tau,\xi) $ as introduced in \eqref{rad:charcoor}, we then find that
\begin{equation}\label{rad:riccati}
	\begin{aligned}
	\partial_{\tau}\tilde{\mathcal{W}}(\tau,\xi)&=\partial_{\tau}\mathcal{W}(\tau,\phi(\tau,\xi))=(\partial_{1}\mathcal{W})(\tau,\phi(\tau,\xi))+(\partial_{2}\mathcal{W})(\tau,\phi(\tau,\xi))\partial_{\tau}\phi(\tau,\xi)\\
	&=-\tau^{-\alpha}\kappa(\tilde{\mathcal{R}},\tilde{\mathcal{S}})(\tau,\xi)(\partial_{2}\mathcal{W})(\tau,\phi(\tau,\xi))-\tau^{-\alpha}(\partial_{1}\kappa)(\tilde{\mathcal{R}},\tilde{\mathcal{S}})\tilde{\mathcal{W}}^{2}(\tau,\xi)\\
	&\quad+\tau^{-\alpha}\kappa(\tilde{\mathcal{R}},\tilde{\mathcal{S}})(\tau,\xi)(\partial_{2}\mathcal{W})(\tau,\phi(\tau,\xi))\\
	&=-\tau^{-\alpha}(\partial_{1}\kappa)(\tilde{\mathcal{R}},\tilde{\mathcal{S}})\tilde{\mathcal{W}}^{2}(\tau,\xi)\,.
	\end{aligned}
\end{equation}

\subsubsection{Constructing initial data for blowup}
For some constant $ \mathring{\mathcal{R}}>0 $, consider initial data  
\begin{equation}\label{rad:initialquiet}
	\cR(1,\cdot)=\cS(1,\cdot)=\mathring{\mathcal{R}}>0 \,.
\end{equation} 
Invoking the translation formula \eqref{rad:changeofvar}, we find that this data coincides with
\begin{equation}\notag
	u(1,\cdot)=0\,, \quad \rho(1,\cdot) = \rho_{c}\mathring{\mathcal{R}}^{\frac{1+K}{\sqrt{K}}}\,.
\end{equation}
Hence, we see that for any $ \mathring{\mathcal{R}} $ we recover a quiet fluid solution, launched by this initial data. 

Now consider a non-trivial perturbation  $ \mathcal{R}(1,\cdot)=\mathring{\mathcal{R}}+h $, where $ h\in C^{\infty}(S^{1}) $ with $ h\neq 0 $. Then, there exists an $ x_{0}\in S^{1} $, such that $ \mathcal{W}(1,x_{0})<0 $. From \eqref{rad:riccati} as well as \eqref{derivs_kappa_lambda}, we have the following Riccati-type ODE for $ \tilde{\mathcal{W}}(\,\cdot\,,x_{0}) $:  
\begin{equation}\notag
	\frac{d}{d\tau}\tilde{\mathcal{W}}(\tau,x_{0})=\frac{12\tilde{\mathcal{S}}}{((3+\sqrt{3})\tilde{\mathcal{R}}-(-3+\sqrt{3})\tilde{\mathcal{S}})^{2}}\tau^{-\alpha}\cW^{2}(\tau,x_{0})\,.
\end{equation}
Note that $ \tilde{\mathcal{R}}(\cdot,x_{0}) $ is constant which is clear from \eqref{rad:eomrescaled} and the definition of $ (\tau,\xi) $ coordinates. In addition, $ \cS $ is constant everywhere for all time. This ODE exhibits finite-time blowup for all $ \alpha\leq 1 $ and, by Gr\"onwall's lemma, solution exists  for all time given sufficiently small initial data for $ \alpha>1 $.

	\subsection{Dust}
	Finally, we consider the case of dust. We again use the method of characteristics to establish shock formation for arbitrarily small perturbations of homogeneous solutions. We  consider spacetimes of the form \eqref{spacetime} subject to $\alpha\leq 1/2$. We remark that for $\alpha>1/2$ it has been shown in \cite{Speck:2013} that small perturbations of homogeneous dust solutions stabilize. Thus, our result shows that this result from \cite{Speck:2013} is sharp in the parameter $\alpha$. 
	\subsubsection{Setup}
	
	For the metric \eqref{spacetime} with $a(t)=t^{\alpha}$ the  equation  \eqref{Euler2_b} with $K=0$ reduces to
	\begin{equation}\notag
		u^{0}\partial_{0}u^{j}+u^{i}\partial_{i}u^{j}+2\alpha t^{-1}u^{0}u^{j}=0\,.
	\end{equation}
This is equivalent to 
	\begin{equation}\label{eomufull}
		\partial_{0}u^{j}+\frac{1}{u^{0}}u^{i}\partial_{i}u^{j}=-2\alpha t^{-1}u^{j}\,.
	\end{equation}
	We can easily see from \eqref{eomufull} that if we set $ u^{2}=u^{3}=0 $ initially, this condition propagates. Thus from now on we  write simply $ u^{1}=u $ and the equations simplify to 
	\begin{equation}\label{eomu}
		\partial_{0}u+\frac{1}{\sqrt{t^{2\alpha}u^{2}+1}}u\partial_{x}u=-2\alpha t^{-1}u.
	\end{equation}

	\subsubsection{Characteristics of the equations of motion}
	
	Assume that we have classical solution $ u(t,x) $ on some patch $ (1,T)\times \mathbb{T}^{3} $ and a curve $ x $ on said patch. The solution on this curve is given by
	\begin{equation*}
		z(s)=u(s,x(s))\,.
	\end{equation*}
	The system of \eqref{eomu} for $ t(1)=1 $, $ x(1)=x_{1} $ and $ z(1,x_{1})=g(x_{1}) $ is then equivalent to
	\begin{equation}\label{characteristics}
		\begin{aligned}
			\frac{dt}{ds}(s)&=1\Rightarrow t=s \,, \\
			\frac{dx}{dt}(t)&=\frac{z(t)}{\sqrt{t^{2\alpha}z(t)^{2}+1}}\,, \qquad \frac{dz}{dt}(t)=-2\alpha t^{-1}z(t)\,.
		\end{aligned}
	\end{equation}
	Integrating the last equation, we have
	\begin{equation*}
		z(t)=t^{-2\alpha}g(x_{1}).
	\end{equation*}
	Furthermore, we have that 
	\begin{equation}\label{xdot}
		\dot{x}(t)=\frac{g(x_{1})t^{-2\alpha}}{\sqrt{t^{-2\alpha}g(x_{1})^{2}+1}}\,.
	\end{equation}

	\subsubsection{Upper bounds for the characteristics}
	
	Now consider the solution to the initial value problem
	\begin{equation}\label{boundzeta}
		\begin{cases}
			\dot{\zeta}(t)=g(x_{1})t^{-2\alpha},\\
			\zeta(1)=x_{1}.
		\end{cases}
	\end{equation}
	Assuming that $ g(x_{1})>0 $, from the fact that
	\begin{equation}\label{boundcalc}
		|\dot{\zeta}(t)|=|g(x_{1})|t^{-2\alpha}>\frac{|g(x_{1})|t^{-2\alpha}}{\sqrt{t^{-2\alpha}g(x_{1})^{2}+1}}=|\dot{x}(t)|
	\end{equation}
	we can infer the bound $\zeta(t)>x(t)$
	for all $ t>1 $. 
	\subsection{Case $\alpha=1/2$}
	Consider the case $ \alpha =\frac{1}{2} $. Assume now without loss of generality that $ x_{1}<x_{2} $ and $ 0<g(x_{2})<g(x_{1}) $ (i.e.~an interval where the initial data is strictly decreasing). Then, the solutions  $ \zeta^{1},\zeta^{2} $ of \eqref{boundzeta}, with initial data $ x_{1} $ and $ x_{2} $ respectively, are given by
	\begin{equation}\label{solzeta}
		\begin{aligned}
		\zeta^{1}(t)&=g(x_{1})\log t+x_{1}\,,\\
		\zeta^{2}(t)&=g(x_{2})\log t+x_{2}. 
		\end{aligned}
	\end{equation}
	Equating the right-hand side, we find that
	\begin{equation}\notag
		\log t=-\frac{x_{2}-x_{1}}{g(x_{2})-g(x_{1})}>0.
	\end{equation}
	Hence, the two curves will meet at time
	\begin{equation}\label{breakingtime}
		T=\exp\left(-\frac{x_{2}-x_{1}}{g(x_{2})-g(x_{1})}\right).
	\end{equation}
	By the mean value theorem, we may conclude that there exists a breaking time $ T_{b} $, such that for any time $ T_{0}> T_{b} $ there exist $ x_{1,2} $, such that $ \zeta_{1,2} $ will have an intersection time $ T<T_{0} $. This breaking time is given by 
	\begin{equation}\label{Tcrit}
		T_{b}=\exp\left(-\frac{1}{\inf_{x}g^{\prime}(x)}\right).
	\end{equation} 
	Note, however, that the curves $ \zeta^{1,2} $ are only upper bounds for $ x^{1,2} $ (the solutions to \eqref{xdot} with the same initial data), but do not envelope them. Hence, nothing about the intersections of the characteristics of \eqref{eomu} can be said at this point. 
	
	\subsubsection{Construction of the enveloping curves}
	
	Integrating \eqref{xdot} in the case of $\alpha =\frac{1}{2} $ with initial data $ x_{1} $ yields
	\begin{equation}\notag
		x^{1}(t)=-2g(x_{1})\log(-\sqrt{t}+\sqrt{g(x_{1})^{2}+t})+c\,.
	\end{equation}
	Prescribing initial data yields
	\begin{equation}\notag
		c=x_{1}+2g(x_{1})\log(-1+\sqrt{g(x_{1})^{2}+1})\,,
	\end{equation}
	and hence
	\begin{equation}\notag
		x^{1}(t)=x_{1}+2g(x_{1})\log\Big(\frac{-1+\sqrt{g(x_{1})^{2}+1}}{-\sqrt{t}+\sqrt{g(x_{1})^{2}+t}}\Big)\,.
	\end{equation}
	Using \eqref{solzeta} we calculate the difference
	\begin{align*}
		|x^{1}(t)-\zeta^{1}(t)|&=\Big| 2g(x_{1})\log\Big(\frac{-1+\sqrt{g(x_{1})^{2}+1}}{-\sqrt{t}+\sqrt{g(x_{1})^{2}+t}}\Big)-2g(x_{1})\log(\sqrt{t})\Big|\notag \\
		&=\Big|2g(x_{1})\log\Big(\frac{-1+\sqrt{g(x_{1})^{2}+1}}{-t+\sqrt{tg(x_{1})^{2}+t^2}}\Big)\Big|\,.
	\end{align*}
	Now let us observe what happens to the denominator as $ t $ goes to infinity: 
	\begin{equation}\notag
		-t+\sqrt{tg(x_{1})^{2}+t^2}=\frac{tg(x_{1})^{2}+t^2-t^{2}}{t+\sqrt{tg(x_{1})^{2}+t^2}}=\frac{g(x_{1})^{2}}{1+\sqrt{t^{-1}g(x_{1})^{2}+1}}\overset{{t\to \infty}}{\longrightarrow}\frac{g(x_{1})^{2}}{2}.
	\end{equation}
	The difference $ |x^{1}(t)-\zeta^{1}(t)| $ is easily seen to be increasing and, by the calculation above, converges to
	\begin{equation}\label{difference}
		|x^{1}(t)-\zeta^{1}(t)|\overset{{t\to \infty}}{\longrightarrow}\Big| 2g(x_{1})\log\Big(\frac{2(-1+\sqrt{g(x_{1})^{2}+1})}{g(x_{1})^2}\Big)\Big|.
	\end{equation}
	Note that by l'Hopital, we have that
	\begin{equation}\label{gto0}
		\lim_{g(x_{1})\to 0}\frac{2(-1+\sqrt{g(x_{1})^{2}+1})}{g(x_{1})^2}=\lim_{g(x_{1})\to 0}\frac{2\frac{1}{2}\frac{1}{\sqrt{g(x_{1})^{2}+1}}2g(x_{1})}{2g(x_{1})}=1,
	\end{equation}
	and hence, for fixed $ t $, 
	\begin{equation}\notag
		|x^{1}(t)-\zeta^{1}(t)|\overset{{t\to \infty}}{\longrightarrow}0.
	\end{equation}
	Thus, we have found that, at least pointwise, by decreasing the initial data size we cause $ \zeta^{1} $ and $ x^{1} $ to converge. 
	
	Now since the the two curves $ \zeta $ and $ x $ cannot ever be further apart than the expression given in \eqref{difference}, we can construct a new curve
	\begin{equation}\notag
		\xi^{1}=\zeta^{1}-\Big|2g(x_{1})\log\Big(\frac{2(-1+\sqrt{g(x_{1})^{2}+1})}{g(x_{1})^2}\Big)\Big|=:\zeta^{1}-m(g(x_{1}))
	\end{equation}
	where $ m $ denotes the \emph{margin} between $ x^{1} $ and $ \zeta^{1} $. Note that, by construction, $ \xi^{1} $ is a lower bound for the characteristic $ x^{1} $.
	
	\subsubsection{Estimating the time of intersection}
	
	Since $ \xi^{1} $ is a lower bound of $ x^{1} $ and $ \zeta^{2} $ is an upper bound of $ x^{2} $, a crossing of these bounds at a point in time indicates that the characteristics $ x^{1,2} $ must have already intersected. 
	Equating $ \xi^{1}(t) $ and $ \zeta^{2}(t) $ yields
	\begin{equation}\notag
		g(x_{1})\log t+x_{1}-m(g(x_{1}))=g(x_{2})\log t+x_{2},
	\end{equation}
	which is equivalent to the statement that
	\begin{equation}\label{breakingxi}
		\log t=-\frac{x_{2}-x_{1}}{g(x_{2})-g(x_{1})}+\frac{m(g(x_{1}))}{g(x_{1})-g(x_{2})}.
	\end{equation}
	Therefore, the enveloping curves $ \zeta^{2} $ and $ \xi^{1} $ intersect precisely at 
	\begin{equation}\notag
		T=\exp\Big(-\frac{x_{2}-x_{1}}{g(x_{2})-g(x_{1})}\Big)\exp\Big(\frac{m(g(x_{1}))}{g(x_{1})-g(x_{2})}\Big).
	\end{equation}
	Taking the limit $ x_{1}\to x_{2} $ is not possible, as it increases the error significantly. However, for fixed $ x_{1,2} $, we see that by rescaling the initial data profile to be smaller ($ g\to \lambda g $), by \eqref{gto0} we have that
	\begin{equation}\notag
		\exp\Big(\frac{m(g(x_{1}))}{g(x_{1})-g(x_{2})}\Big)\overset{\lambda\to 0}{\longrightarrow} 1.
	\end{equation}
	Hence, small initial data ensures that shock happens later in time, but increases the accuracy of the prediction of the time of the forming of the shock.

	\subsection{Case $ \alpha< \frac{1}{2} $}
	
	We consider now  $ \alpha<\frac{1}{2} $. As before, we know that $ \zeta $, the solution to the initial value problem \eqref{boundzeta}, is an upper bound to the solution (given that $ g(x_{1})>0 $). \\
Consider the solution to the initial value problem
	\begin{equation}\label{boundgamma}
		\begin{cases}
			\dot{\gamma}(t)=\frac{g(x_{1})}{\sqrt{g(x_{1})^{2}+1}}t^{-2\alpha},\\
			\gamma(1)=x_{1}.
		\end{cases}
	\end{equation}
	By an analogous estimate to \eqref{boundcalc}, we see that 
	\begin{equation}
		\gamma(t)<x(t)
	\end{equation}
	for all $ t>1 $, given that $ g(x_{1})>0 $. 
	
	Consider again two solutions launched from $ x_{1,2} $ with $ 0<g(x_{2})<g(x_{1}) $. By the previous analysis, we see that $ \gamma_{1} $ is a lower bound to $ x^{1} $ and $ \zeta^{2} $ is a upper bound for $ x^{2} $. Equating these bounds we find the relation
	\begin{equation}\notag
		x_{2}+g(x_{2})\frac{t^{-2\alpha+1}-1}{-2\alpha-1}=	x_{1}+\frac{g(x_{1})}{\sqrt{g(x_{1})^{2}+1}}\frac{t^{-2\alpha+1}-1}{-2\alpha-1}.
	\end{equation}
	Solving for $ t $ yields a breaking time
	\begin{equation}\notag
		T=\Big((-2\alpha+1)(x_{2}-x_{1})\frac{1}{\frac{g(x_{1})}{\sqrt{g(x_{1})^{2}+1}}-g(x_{2})}+1\Big)^{\frac{1}{1-2\alpha}}.
	\end{equation}
	Note that all of the terms in this expression are strictly positive with the exception of 
	\begin{equation}
		\frac{g(x_{1})}{\sqrt{g(x_{1})^{2}+1}}-g(x_{2}).
	\end{equation}
	The expression above may become negative, if, for example, $ g $ does not vary much in $x$ but is large enough so that the denominator is dominant. 
		This can be remedied by the following observation: given any initial data profile $ g $, we define a new smaller data profile by $ \tilde{g}=\lambda g $, for some $ \lambda>0 $ to be determined later. Then, we calculate
	\begin{equation}\label{rescalecalc}
		\frac{\tilde{g}(x_{1})}{\sqrt{\tilde{g}(x_{1})^{2}+1}}-\tilde{g}(x_{2})=\frac{\lambda g(x_{1})}{\sqrt{\lambda^{2}g(x_{1})^{2}+1}}-\lambda g(x_{2})=\frac{\lambda (g(x_{1})-g(x_{2})\sqrt{\lambda^{2}g(x_{1})^{2}+1})}{\sqrt{\lambda^{2}g(x_{1})^{2}+1}}\,.
	\end{equation}
	Now, choose $ \lambda $ such that
	\begin{equation}\notag
		\lambda<\frac{1}{g(x_{1})}\sqrt{\frac{g(x_{1})^{2}}{g(x_{2})^{2}}-1} \,.
	\end{equation}
	Note that this is always possible as $ \frac{g(x_{1})}{g(x_{2})}>1 $. Then the bracket in the numerator in \eqref{rescalecalc}, and therefore the whole expression, is larger than zero. 
	We have therefore guaranteed that the expression for $ T $ exists and is larger than $ 1 $. Hence, we have shown that there exist arbitrarily small, smooth initial data that develop shock singularities in finite time. 

\subsection*{Acknowledgements}

D.F.~and M.O.~acknowledge support by the Austrian Science Fund (FWF) grant  \emph{Matter dominated Cosmology} 10.55776/PAT7614324 as well as \emph{Relativistic Fluids in Cosmology}  P 34313-N and thank the Erwin-Schr\"odinger Institute for hospitality during the program \emph{Nonlinear waves and Relativity}. Futhermore, M.O.~acknowledges support by the Austrian Science Fund (FWF) grant Y963 and recognizes that this work was in part funded within the Dimitrov Fellowship Program of the OeAW. 

%%%
	
\bibliographystyle{plain}
%\bibliography{full-bib}{}
\bibliography{refs}
\end{document}